\newcommand{\seg}{\overline}
\title{Hidden Mobile Guards in Simple Polygons\footnote{An abstract version of this paper was presented at the 21st Fall Workshop on Computational Geometry, 2011. Research supported in part by NSF grants CCF-0830734 and CBET-0941538.}}
\author{Sarah Cannon\thanks{Department of Computer Science,
	Tufts University, Medford, MA, USA {\tt scanno01@cs.tufts.edu}}
	\and
        Diane L. Souvaine\thanks{Department of Computer Science,
        Tufts University, Medford, MA, USA {\tt dls@cs.tufts.edu}}
	\and
        Andrew Winslow\thanks{Department of Computer Science,
        Tufts University, Medford, MA, USA {\tt awinslow@cs.tufts.edu}}}
\begin{document}
\thispagestyle{empty}
\maketitle

\begin{abstract}
We consider guarding classes of simple polygons using mobile guards (polygon edges and diagonals) under the constraint that no two guards may see each other.
In contrast to most other art gallery problems, existence is the primary question: does a specific type of polygon admit \emph{some} guard set?
Types include simple polygons and the subclasses of orthogonal, monotone, and starshaped polygons.
Additionally, guards may either exclude or include the endpoints (so-called \emph{open} and \emph{closed} guards).
We provide a nearly complete set of answers to existence questions of open and closed edge, diagonal, and mobile guards in simple, orthogonal, monotone, and starshaped polygons, with some surprising results. 
For instance, every monotone or starshaped polygon can be guarded using hidden open mobile (edge or diagonal) guards, but not necessarily with hidden open edge or hidden open diagonal guards. 
\end{abstract}

\section{Definitions}

We define the \emph{boundary} of a polygon $P$ (denoted $\partial P$) as a simple polygonal chain consisting of a sequence of vertices specified in counterclockwise order, and the open set enclosed by $\partial P$ to be the \emph{interior} of $P$ (denoted ${\rm int}(P)$).
An \emph{edge} $e = \seg{pq}$ of the polygon is an interval of $\partial P$ between consecutive vertices $p, q$, and a \emph{diagonal} $d = \seg{rs}$ of $P$ is a straight line segment between non-consecutive vertices $r, s$ of $\partial P$ such $d - \{r, s\} \in {\rm int}(P)$, i.e. the portion of $d$ excluding its endpoints lies in the interior of $P$.

We consider guarding ${\rm int}(P)$ using a subset of the edges and diagonals of $P$.
A guard $g$ \emph{sees} or \emph{guards} a location $l$ in the polygon if $l$ is \emph{weakly visible}~\cite{Avis-1981} from the guard: there exists a point $p \in g$ such that the interior of the segment $lp$ lies in the interior of the polygon.
Edges and diagonals selected as guards are called \emph{edge guards} and \emph{diagonal guards}, respectively, and a \emph{mobile guard}~\cite{ORourke-1983} is either an edge or a diagonal guard.
If a set $S$ of edges and diagonals of $P$ is such that every location in the interior of $P$ is seen by at least one guard in $S$, then $S$ is a \emph{guard set} of $P$ and $P$ is said to \emph{admit} a guard set.
A \emph{closed guard set} includes the vertices at both ends of each edge or diagonal.
If all endpoints are excluded, the guard set is called an \emph{open guard set}.
 
In addition to \emph{simple polygons} or simply \emph{polygons}, we consider a number of special classes of polygons.
An \emph{orthogonal polygon} is a polygon that can be rotated such that all edges are parallel to the x- or y-axis.
A \emph{monotone polygon} is a polygon that can be rotated such that the portion of the polygon intersecting any vertical line consists of a connected interval.
A \emph{starshaped polygon} is a polygon that can be translated such that an interior point coincides with the origin and sees all locations in the interior of the polygon, and the \emph{kernel} of the polygon is the set of all points in the polygon with this property.
These three classes (along with convex and spiral polygons) are described by O'Rourke~\cite{ORourke-1987} in the context of guarding problems as being ``usefully distinguished in the literature.''

Finally, we add the constraint that a guard set is \emph{hidden}: no pair of guards in the set see each other.
Here a pair of guards $g_1, g_2$ in a polygon $P$ can see each other if there exists a pair of points $p \in g_1, q \in g_2$ such that $pq - \{p,q\} \in {\rm int}(P)$.

\section{Introduction}

Edge, diagonal, and mobile guards in polygons have been studied extensively in the past.
Avis and Toussaint~\cite{Avis-1981} considered the case where a single closed edge is sufficient to guard the entire polygon.
Shortly after, Toussaint gave an example of a polygon whose smallest closed edge guard set is $\lfloor n/4 \rfloor$~\cite{ORourke-1983} and conjectured that an edge guard set of this size is sufficient for any polygon.
O'Rourke~\cite{ORourke-1983} showed that closed mobile guard sets of size $\lfloor n/4 \rfloor$ are sometimes necessary and always sufficient for polygons.
For closed diagonal guards, Shermer~\cite{Shermer-1992} has shown that guard sets of size $\lfloor (2n + 2)/7 \rfloor$ are necessary for some polygons, and no polygon requires a guard set of size greater than $\lfloor (n-1)/3 \rfloor$.

More recently, open edge guards were suggested by Viglietta~\cite{Viglietta-2011} and studied by Benbernou et al.~\cite{Benbernou-2011} and T\'{o}th et al.~\cite{Toth-2011}, who showed that open edge guard sets of size $\lfloor n/3 \rfloor$ and $\lfloor n/2 \rfloor$ are sometimes necessary and always sufficient for simple polygons. 

The study of hidden guards began with Shermer~\cite{Shermer-1989} who gave several results, including examples of polygons that are not guardable using hidden vertex guards.
The study of hidden edges has only been initiated recently by Kranakis et al.~\cite{Kranakis-2009} who showed that computing the largest hidden open edge set in a polygon (ignoring guarding) cannot be approximated within an arbitrarily small constant factor unless P = NP. 
In the same theme, Kosowoski et al.~\cite{Kosowski-2006} have studied cooperative mobile guards, where each guard is \emph{required} to be seen by another guard.
Such a constraint is the opposite of hiddenness, which \emph{forbids} any guard from seeing any other guard. 

Here we evaluate the existence of hidden edge, diagonal, and mobile guard sets for simple polygon classes.
A summary of results is seen in Table~\ref{tab:results}.

\begin{table}[ht]
\centering
\begin{tabular}{|c|c|c|c|c|c|}
\hline 
\multicolumn{2}{|c|}{Guard class} & \multicolumn{4}{|c|}{Polygon class} \\
\hline
Inclusion & Type & Simple & Ortho & Mono & Star \\
\hline
     & Edge & No & Yes & No & No \\
\cline{2-6}
Open & Diagonal & No & No & No & No \\
\cline{2-6}
     & Mobile & No & Yes & Yes & Yes \\
\hline
     & Edge & No & No & No & No \\
\cline{2-6}
Closed & Diagonal & No & No & No & No \\
\cline{2-6}
     & Mobile & No & No & No & ? \\
\hline
\end{tabular}
\caption{New results in this paper. Entries indicate whether a hidden guard set exists for every polygon in the class.}
\label{tab:results}
\end{table}

\section{Open edge guards}

Recall open edge guards are edges of the polygon excluding the endpoints.

\begin{lemma}
There exists a monotone polygon that does not admit a hidden open edge guard set.
\end{lemma}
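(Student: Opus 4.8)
The plan is to construct an explicit monotone polygon $P$ and to prove that every open edge guard set of $P$ contains two guards that see one another; this immediately rules out a hidden guard set. Concretely, $P$ will be comb-like — an $x$-monotone chain of thin, deep ``teeth'' over a base, with one or two small auxiliary pockets added — and it will be designed around two interior regions $R_1, R_2$, each buried deep inside a thin pocket. For such a region the set $E_i$ of edges that can weakly see \emph{all} of $R_i$ is short and can be listed explicitly; openness shrinks $E_i$ further, because an open edge cannot use its endpoints, so any sightline into $R_i$ that passes through a vertex of $\partial P$ is unavailable. The construction will be tuned so that $E_1 \cap E_2 = \emptyset$ and every edge of $E_1$ sees every edge of $E_2$.

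The reason to expect this is achievable in a comb is that the long base edge of a comb sees every tooth edge: from a point of the base just below a valley one can look up and slightly to the side and catch the tooth edge leaving that valley. So one pocket can be placed low, so that $E_1$ lies near the base (and in particular contains the base edge), while the other is tucked inside a single tooth — ideally in a spot the base edge cannot see — so that $E_2$ consists of edges of that tooth, all of which the base edge does see.

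Granting the description of $E_1$ and $E_2$, the proof concludes with a small case analysis. Any open edge guard set must contain at least one edge of $E_1$ (to cover $R_1$) and at least one edge of $E_2$ (to cover $R_2$); these two edges are distinct since $E_1 \cap E_2 = \emptyset$, and by construction they see each other, contradicting hiddenness. As $E_1$ and $E_2$ are small, only finitely many cases occur.

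The hard part is pinning down exactly which edges lie in each $E_i$ while simultaneously guaranteeing the cross-visibility. This is structurally awkward because the two edges bounding a thin pocket always see into the pocket, so $E_i$ can never be a single edge; one must instead arrange the pockets so that, despite this, every edge of $E_1$ still sees every edge of $E_2$ and the two sets stay disjoint. Doing so is constrained by $x$-monotonicity, which forbids the ``backward'' excursions that would make such an arrangement routine, so the pockets and their bounding edges must all be oriented to look out onto a common, essentially convex portion of ${\rm int}(P)$ across which any two of them are mutually visible. A final routine check verifies that the polygon is indeed $x$-monotone.
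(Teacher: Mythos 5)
Your proposal is a plan rather than a proof: the polygon is never actually constructed, and the two properties on which everything rests --- that $E_1 \cap E_2 = \emptyset$ and that every edge of $E_1$ sees every edge of $E_2$ --- are exactly the things you flag as ``the hard part'' and never establish. There is, moreover, good reason to doubt this particular plan can be carried out in a monotone polygon. A deep thin pocket is bounded by two nearly antiparallel walls, and an open edge sees only points on its interior side; in an $x$-monotone comb the left wall of a tooth faces right and so cannot see anything to the left of its supporting line, in particular not the walls of a tooth to its left. Hence for two pockets the sets $E_1, E_2$ each contain a wall facing away from the other pocket, and the required all-pairs cross-visibility fails; monotonicity prevents the reorientation that would fix this. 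This is presumably why the paper's argument does not end with a ``two forced guards see each other'' contradiction: it forces one edge from each of two ears (edges within a single ear are mutually visible, but the two chosen edges from different ears need not be), and then runs a further exhaustive case analysis showing that no such pair can be \emph{extended} to a hidden open edge guard set.

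There is also a logical slip in your reduction. You define $E_i$ as the set of edges that see \emph{all} of $R_i$ and then assert that any guard set must contain an edge of $E_i$ ``to cover $R_i$.'' That does not follow when $R_i$ is a region: it could be covered collectively by several edges, none of which sees all of it. You would need to take $R_i$ to be a single point (so that $E_i$ is the set of edges seeing that point), or separately prove that no combination of edges outside $E_i$ covers $R_i$ --- the paper does the latter for its ears by enumerating all maximal combinations of non-ear edges and exhibiting an unguarded spot in each case.
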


\begin{figure}[ht]
\centering
\includegraphics[width=.6\columnwidth]{./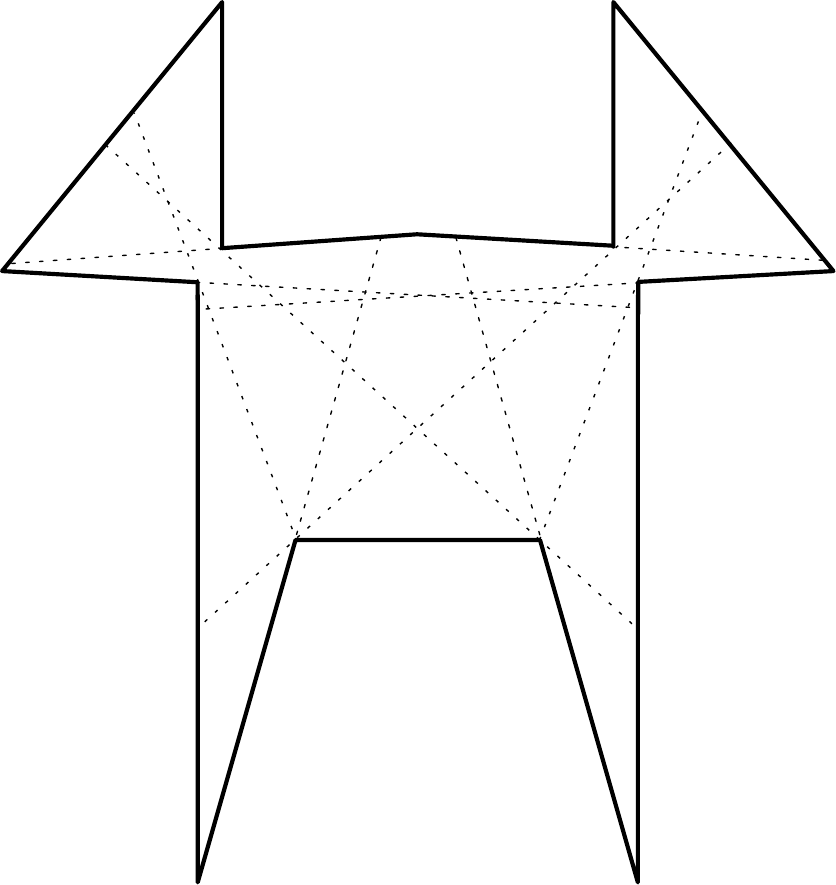}
\caption{A monotone polygon that does not admit a hidden open edge guard set.}
\label{fig:open-edge-monotone-ex}
\end{figure}

\begin{proof}
See Figure~\ref{fig:open-edge-monotone-ex}.
We refer to the convex regions bounded by three edges in the upper left and right portions of the polygon as \emph{ears}. 
Consider guarding the pair of ear regions without using any of the three edges that form each ear.
The cases resulting from these attempts are seen in Figure~\ref{fig:open-edge-monotone-pf-2}.
In each case, any maximal combination of non-ear edges fails to guard either ear completely.
Moreover, a portion of the remaining unguarded region in each ear is not visible from any edge of the other ear.
Thus any guard set contains one of the three edges in each ear.
Also, every pair of ear edges in the same ear see each other, so any guard set contains exactly one edge in each ear.

\begin{figure}[ht]
\centering
\includegraphics[width=1.0\columnwidth]{./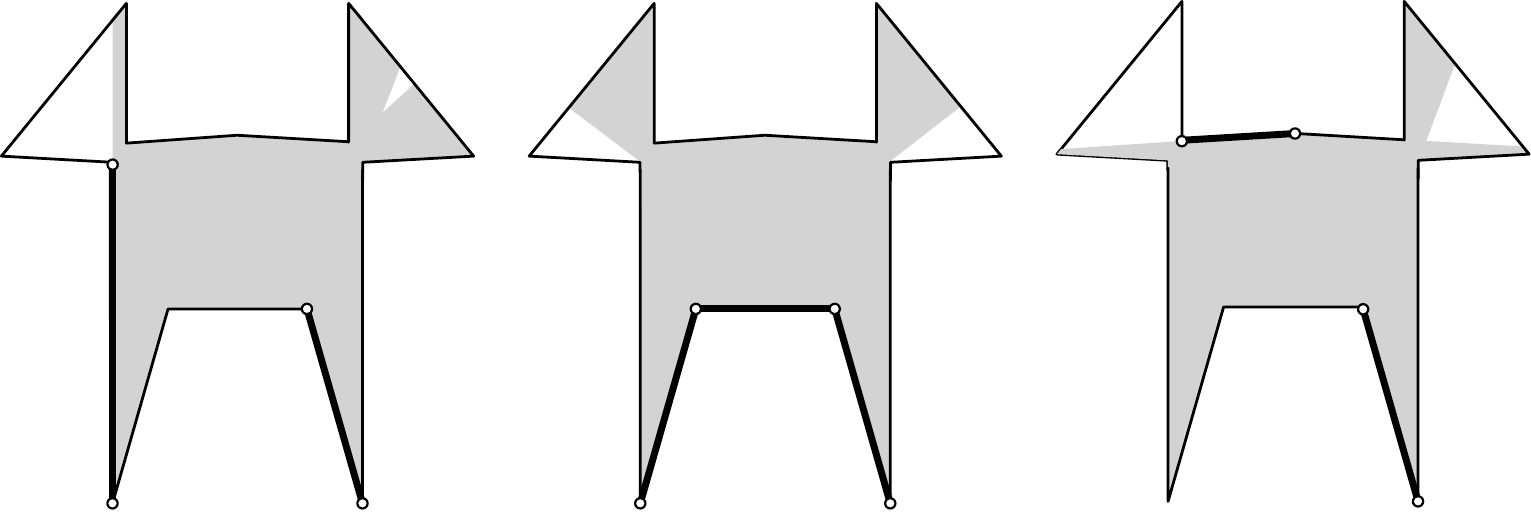}
\caption{All maximal combinations of open edge guards that exclude the six ear edges.} 
\label{fig:open-edge-monotone-pf-2}
\end{figure}

Next, consider possible ear-edge pairs containing one edge from each ear.
In Figure~\ref{fig:open-edge-monotone-pf-1} it is shown that for each such ear-edge pair, the pair cannot be augmented to form a hidden open edge guard set for the polygon.
Thus the polygon cannot be guarded with hidden open edge guards. 

\begin{figure}[ht]
\centering
\includegraphics[width=1.0\columnwidth]{./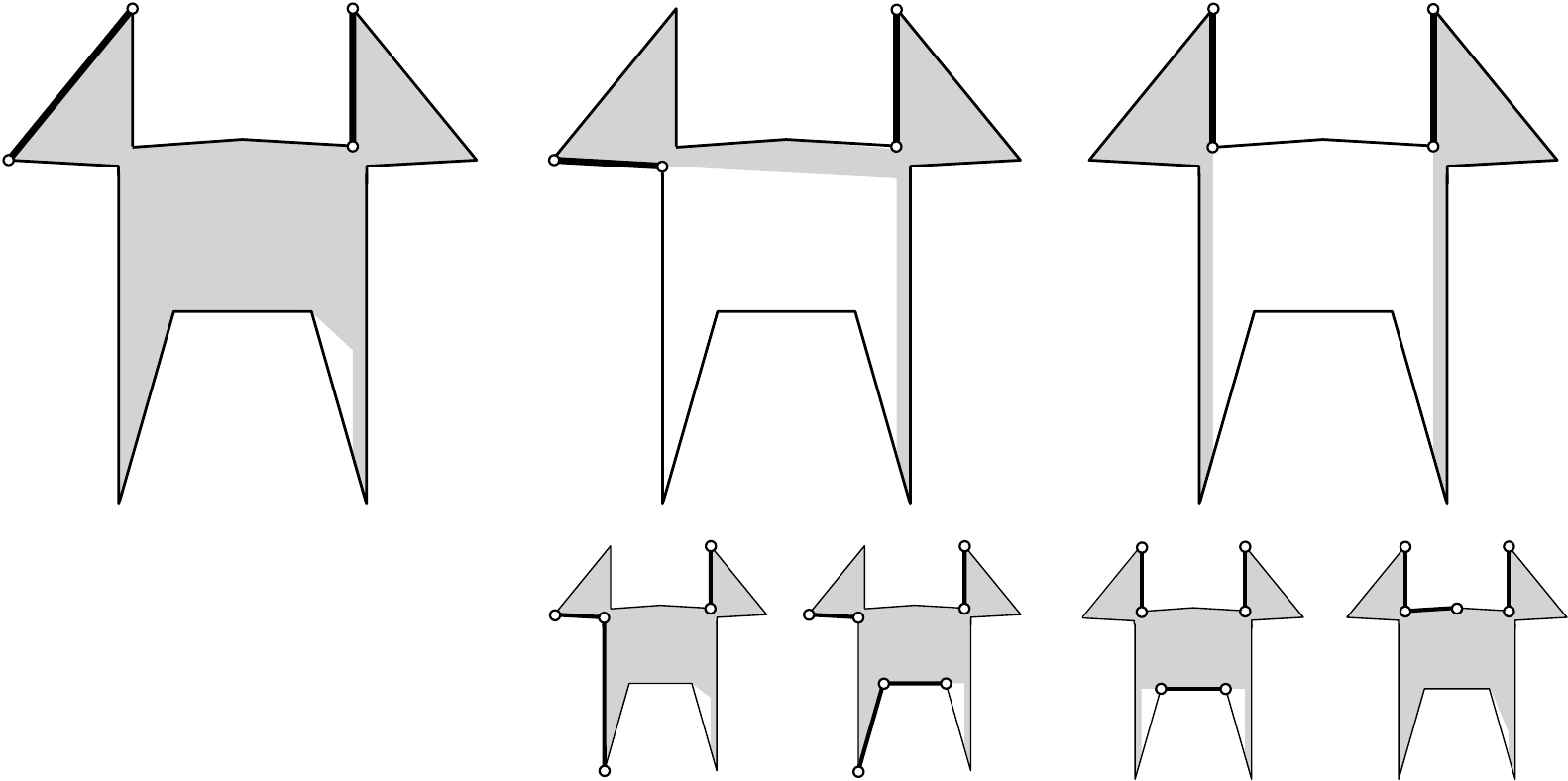}
\caption{All combinations of ear edge pairs and the maximal hidden sets containing each ear edge pair.}
\label{fig:open-edge-monotone-pf-1}
\end{figure}

\end{proof}

\begin{lemma}
There exists a starshaped polygon that does not admit a hidden open edge guard set.
\end{lemma}

\begin{figure}[ht]
\centering
\includegraphics[width=1.0\columnwidth]{./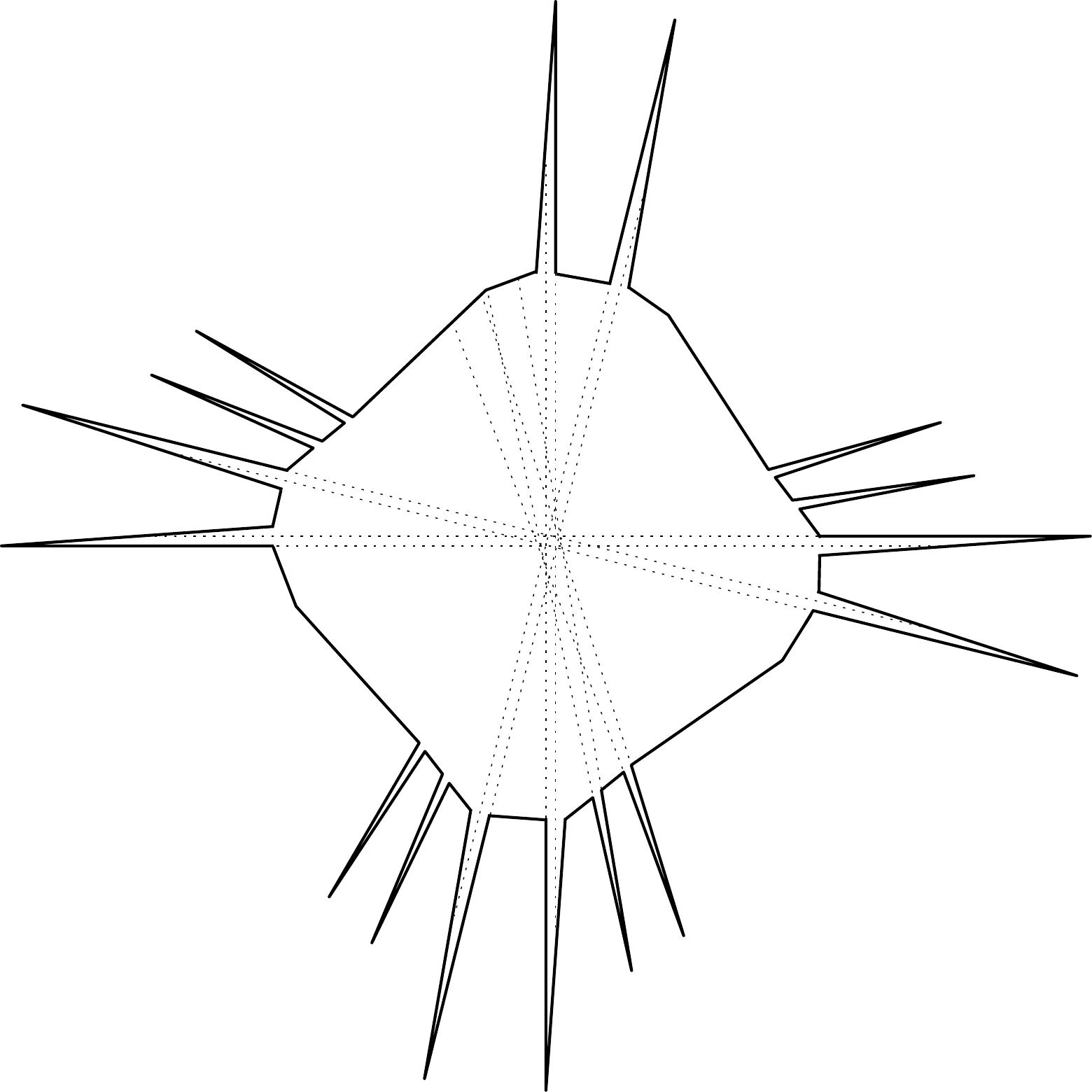}
\caption{A starshaped polygon that does not admit a hidden open edge guard set.}
\label{fig:open-edge-ss-ex}
\end{figure}

\begin{proof}

See Figure~\ref{fig:open-edge-ss-ex}.
The polygon consists of a central convex region with numerous spikes emanating from it.
Figure~\ref{fig:open-edge-ss-pf-2} provides a labeled version of the polygon, with two sets of four large spikes each ($\{a_i\}$ and $\{b_i\}$) and four sets of two small spikes each ($\{c_1, c_2\}$ forms one such set).
Call edges on the central convex region \emph{central edges} and the spike pairs $\{a_1, a_3\}$, $\{a_2, a_4\}$, $\{b_1, b_3\}$, $\{b_2, b_4\}$ \emph{opposing spike pairs}.
Consider guarding the four spikes $\{a_i\}$ without using central edges (see Figure~\ref{fig:open-edge-ss-pf-1}).

\begin{figure}[ht]
\centering
\includegraphics[width=1.0\columnwidth]{./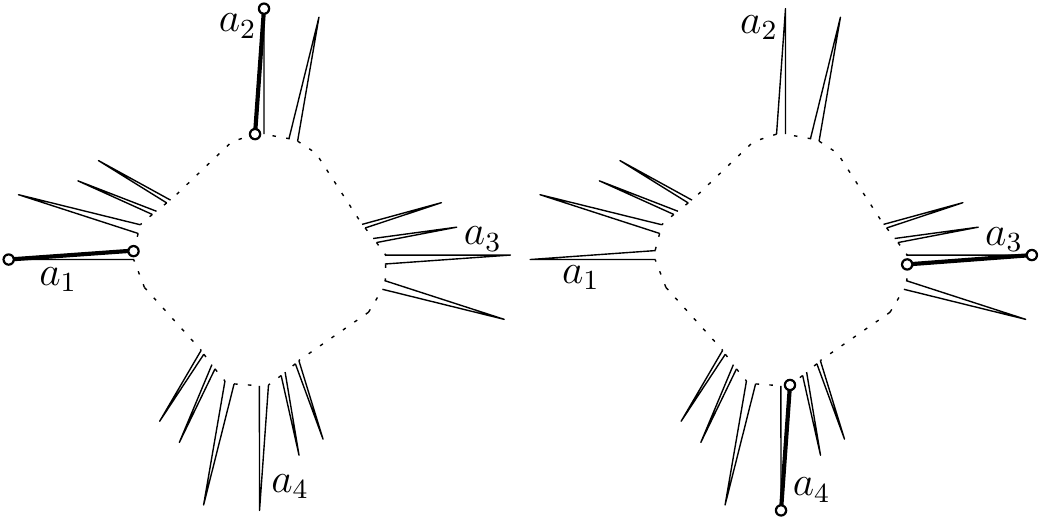}
\caption{The two possible guardings of the four spike $\{a_i\}$ without using central edges (dotted).}
\label{fig:open-edge-ss-pf-1}
\end{figure}

Only one edge per opposing spike pair may be in any hidden edge guard set, as all four edges of an opposing spike pair see each other.
Each spike has two asymmetric edges; one is able to guard the entire opposing spike pair, while the other is not.
Each spike also contains a location not seem by any spike edge not in the spike's opposing spike pair.
Finally, a pair of edges from $a_1$ and $a_4$ see each other, as do a pair in $a_2$ and $a_3$.
So any hidden edge guard set for the opposing spike pairs $\{a_1, a_3\}$ and $\{a_2, a_4\}$ that does not include central edges consists of one of two pairs seen in Figure~\ref{fig:open-edge-ss-pf-1}.

Now consider guarding the entire polygon.
Any central edge guards the interior of at most one spike from $\{a_i\}$ or $\{b_i\}$. 
So one of the two spike sets $\{a_i\}$ and $\{b_i\}$ must be guarded without using central edges.
Without loss of generality, assume the $\{a_i\}$ set is guarded in this way.
Then one of the pairs of edges seen in Figure~\ref{fig:open-edge-ss-pf-1} must be in the guard set.
Again without loss of generality, assume the edge pair of $a_1$ and $a_2$ are selected, as in Figure~\ref{fig:open-edge-ss-pf-2}.
Then there exist two spikes $c_1$ and $c_2$ whose edges are both seen by the guard edges in spikes $a_1$ and $a_2$, but portions of the interiors of $c_1$ and $c_2$ remain unguarded.
The only edges sufficient to guard the interiors of $c_1$ and $c_2$ are the central edges $e_1$ and $e_2$.
However, $e_1$ and $e_2$ each guard the interior of only one spike.
Thus a portion of the interior of either $c_1$ or $c_2$ must remain unguarded, and the polygon cannot be guarded using hidden open edge guards.

\begin{figure}[ht]
\centering
\includegraphics[width=1.0\columnwidth]{./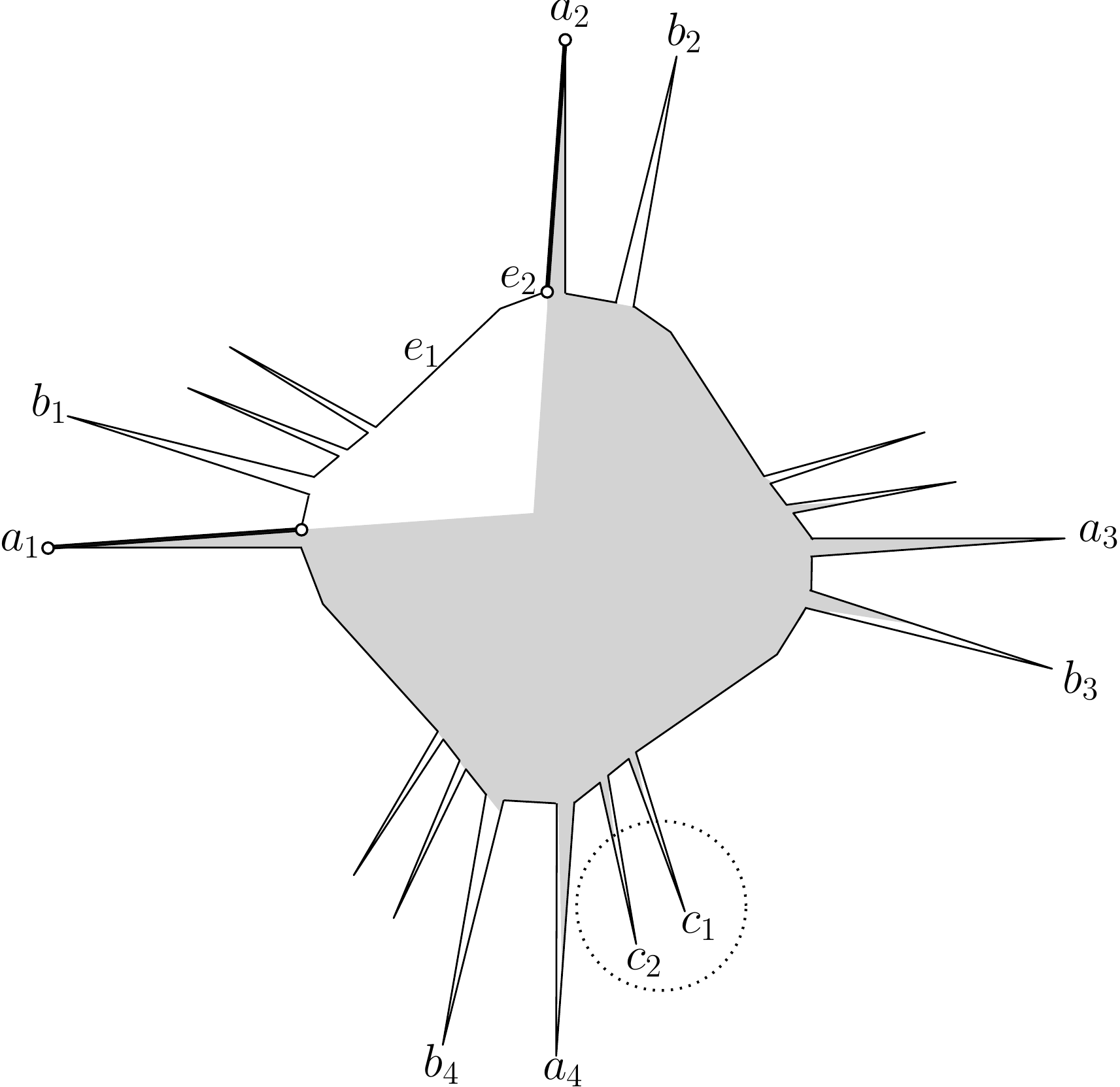}
\caption{A incomplete but necessary set of guard edges and the region they guard. 
The interiors of $c_1$ and $c_2$ remain partially unguarded and cannot be guarded with a hidden open edge set.}
\label{fig:open-edge-ss-pf-2}
\end{figure}

\end{proof}

\begin{lemma}
\label{lem:hidden-open-edge-ortho}
Every orthogonal polygon admits a hidden open edge guard set.
\end{lemma}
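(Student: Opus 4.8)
The statement is a positive existence result, so unlike the previous lemmas I need a construction that works for *all* orthogonal polygons. The natural tool is the standard decomposition of an orthogonal polygon into axis-aligned rectangles. The plan is to fix a partition of the orthogonal polygon $P$ into rectangles using only vertical cuts — extend a vertical chord inward from every reflex vertex until it hits the boundary. This yields a set of rectangular "pieces," and I want to select one horizontal edge per piece as a guard, choosing these edges so that (i) each chosen edge is a genuine polygon edge (not an interior cut), (ii) the chosen edges together cover the whole interior, and (iii) no two chosen edges see one another.

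First I would set up the rectangle partition and observe that each rectangle $R$ in it has its top side and bottom side each lying on a maximal horizontal segment of $\partial P$ or on a horizontal cut; but since all cuts are vertical, the top and bottom of every rectangle in fact lie on $\partial P$. So every rectangle contributes two candidate horizontal polygon edges, and an open horizontal edge $e$ spanning the top (or bottom) of $R$ weakly sees all of $R$'s interior. Next I would argue coverage: choosing, say, the top polygon edge above each rectangle guards the union of the rectangles, which is all of ${\rm int}(P)$. The subtle point is that a single horizontal edge of $\partial P$ may serve as the top of several adjacent rectangles at once, which is fine and in fact reduces the guard count.

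The heart of the argument is hiddenness. Two of the chosen horizontal edges $e_1, e_2$ could see each other only through a vertical "tube" of interior connecting them — i.e., there is a vertical line meeting both and the vertical strip between them lies in ${\rm int}(P)$. I would rule this out by being careful about *which* horizontal edge I pick for each rectangle: process the rectangles in order of their $x$-range and, for each maximal vertical "column" of stacked rectangles, alternate between taking the top edge and the bottom edge as we move through adjacent columns, or more robustly, choose for each rectangle the horizontal edge (top or bottom) that is "locally innermost" so that any vertical visibility segment leaving a chosen edge immediately exits the interior or is blocked by the rectangle's own chosen edge acting as an occluder. Because open edges exclude endpoints, I also need to check that two chosen edges meeting only at a shared endpoint do not see each other — but open edges sharing just a vertex are non-adjacent in the weak-visibility sense since the connecting segment degenerates to a point, so this is automatic.

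I expect the main obstacle to be exactly this hiddenness bookkeeping: guaranteeing a consistent top/bottom choice across the whole rectangle partition so that no vertical corridor of interior joins two chosen edges, while *simultaneously* maintaining full coverage. The cleanest route is probably an inductive/peeling argument — remove a boundary rectangle whose chosen edge is forced, verify it blocks visibility into the rest, and recurse on a smaller orthogonal polygon — or, alternatively, a two-coloring of the rectangle adjacency (dual) graph that assigns "use top" / "use bottom" consistently; I would need to check that this dual graph has the right structure (e.g. is a forest, or bipartite) for a horizontally-cut partition. Handling the degenerate cases where the partition produces thin rectangles sharing an edge, or where a reflex-vertex chord hits another chord's endpoint, is the routine-but-fiddly part I would relegate to a general-position or perturbation remark.
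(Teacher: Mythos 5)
There is a genuine gap, and it sits exactly where you predicted: the ``hiddenness bookkeeping'' is not an afterthought to be resolved by peeling or two-coloring the rectangle dual graph --- it is the entire content of the lemma, and your proposal never actually resolves it. Worse, the specific devices you float would hurt you. Mixing top and bottom edges is precisely what creates mutual visibility: a horizontal edge that bounds the interior from below and another that bounds it from above, stacked with interior between them, see each other along a vertical segment, so ``alternate between taking the top edge and the bottom edge as we move through adjacent columns'' manufactures exactly the conflicts you are trying to avoid. Your fallback claim that two open edges sharing an endpoint cannot see each other ``since the connecting segment degenerates to a point'' is also false in general (two edges meeting at a convex vertex see each other via nearby interior points), though it happens to be moot here because consecutive edges of an orthogonal polygon alternate orientation.

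The missing idea is to make one global choice of side rather than a per-rectangle choice: take \emph{every} horizontal edge of $P$ that bounds the interior from below (all ``floor'' edges), with no decomposition at all. Coverage is a downward ray shot from any interior point (perturbing slightly if the ray hits a vertex). Hiddenness is then immediate from the uniform orientation: any two such edges at distinct heights cannot be joined by an interior segment, because such a segment would have to approach the higher edge from below, where the polygon's exterior lies; and two such edges at the same height are separated along their common line by a vertex of one of them. This is the paper's argument, and it makes the rectangle partition, the dual-graph structure, and the general-position worries all unnecessary. If you want to salvage your decomposition approach, the fix is the same observation in disguise --- always pick the bottom (or always the top) edge of every rectangle --- at which point the decomposition is doing no work.
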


\begin{proof}
For an input orthogonal polygon $P$, select a guard set $G$ in the following way: choose all edges of $P$ parallel to the x-axis that bound the interior of the polygon from below. 
Given a location $l \in {\rm int}(P)$, shoot a ray from $l$ at angle $-\pi/2$.
If the ray does not intersect a vertex of $\partial P$, then it intersects an edge in $G$.
If the ray does intersect a vertex (call it $v$), shoot two rays at angles $-\pi/2 - \varepsilon$, $\pi/2 + \varepsilon$, with $\varepsilon > 0$ small enough that one of the edges intersected is incident to $v$.
Such a $\varepsilon$ exists because $\partial P$ is simple.
In either case, the segment from $l$ to $\partial P$ formed by the ray implies that the edge intersected can see $l$, so $G$ is a guard set. 

Now consider any pair of guards $g_1, g_2 \in G$.
If the y-coordinates of $g_1$ and $g_2$ are distinct, then any segment from $g_1$ to $g_2$ must leave the polygon entirely.
If the y-coordinates are identical, the interior of a segment from $g_1$ to $g_2$ must intersect a vertex of the edge containing $g_1$ and so $g_1$ cannot see $g_2$.
So $G$ is also a hidden set.
\end{proof}

For an input polygon with $n$ edges, scanning the edges and selecting those with the properties described can be done by a simple $O(n)$ algorithm.

\section{Open diagonal guards}

\begin{lemma}
There exists a monotone and starshaped polygon that does not admit a hidden open diagonal guard set.
\end{lemma}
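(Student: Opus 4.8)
The plan is to exhibit a single polygon $P$ that is simultaneously $x$-monotone and starshaped (drawn in a figure), and to show, by an exhaustive forcing argument in the spirit of the previous two lemmas, that it admits no hidden open diagonal guard set. As before, $P$ will consist of a shallow central convex region from which several narrow spikes emanate, with vertices placed so that: each spike is deep enough that only a few \emph{local} diagonals (those with an endpoint in or next to that spike) can see into its apex; the central region is shallow enough that a single kernel point still sees into every spike, giving starshapedness; and the spikes are laid out left to right with disjoint $x$-extents, giving $x$-monotonicity. The guards in question are diagonals, i.e. segments between two vertices, so the construction must also place vertices so that exactly the intended ``useful'' diagonals exist.

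First I would prove a \emph{spike lemma}: for each spike $s$, list the (short) set of diagonals whose relative interior reaches the part of ${\rm int}(P)$ near the apex of $s$, and show every other diagonal of $P$ misses a fixed sliver of ${\rm int}(P)$ inside $s$. Here the openness of a diagonal is exactly what bites: a diagonal that terminates at the apex vertex of $s$ omits that vertex, and by placing a second reflex vertex just inside $s$ one arranges that even this ``most local'' diagonal fails to cover a neighbourhood of that reflex vertex, so no single diagonal covers its own spike completely. Second, I would record the mutual-visibility relations among the candidate diagonals: any two diagonals that reach into distinct spikes but both cross the central convex region see each other, so a hidden set uses at most one center-crossing diagonal; and within each ``opposing'' group of spikes at most one spike-local diagonal may be chosen. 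Third, combine these facts: a hidden open diagonal guard set must, for all but at most one spike, take a guard from that spike's tiny local list, and the no-two-see constraint then collapses the possibilities to a handful of global configurations (to be enumerated in a figure, as in the earlier proofs); in each configuration a counting argument — strictly more spikes demanding a dedicated local diagonal than the hidden constraint allows, together with the spike lemma showing non-local diagonals cannot rescue the leftover slivers — leaves some spike's sliver unguarded.

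The main obstacle, exactly as in the open-edge cases, is the simultaneous bookkeeping forced by the figure: one must fix coordinates so that (i) $P$ is provably both monotone and starshaped, (ii) the list of diagonals that can see into each spike is genuinely as small as claimed — in particular no ``accidental'' long diagonal threads two spikes at once — and (iii) the visibility graph among candidate diagonals is precisely what the counting step needs. I expect (ii) to be the delicate point, since diagonals are far more flexible than edges: the spikes must be narrow and their base slits offset enough that a diagonal entering one spike is geometrically blocked from reaching another, while a common kernel point still survives. Balancing these constraints is where the explicit figure and a short case check over all maximal hidden diagonal sets do the real work.
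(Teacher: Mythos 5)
There is a genuine gap here: for a lemma whose entire content is the existence of a specific counterexample, the construction and its verification \emph{are} the proof, and your proposal defers both. You never fix the polygon, never produce the ``short set of diagonals'' for each spike, and you explicitly concede that the delicate point --- ensuring that no accidental diagonal threads two spikes at once, while a kernel point and $x$-monotonicity both survive --- is left to ``the explicit figure and a short case check.'' That is precisely the step most likely to fail: diagonals are determined by vertex pairs, so every vertex you add to sharpen one spike creates new candidate diagonals everywhere else, and your spike lemma and visibility-graph bookkeeping must be re-verified each time. Without coordinates (or at least a combinatorial description of the vertex set and a complete list of its diagonals), the counting argument in your third step has nothing to count.

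It is also worth noting that the paper avoids this enumeration entirely with a much simpler structural trick. Its polygon is built so that \emph{every} diagonal has one endpoint on the lowermost horizontal edge and its other endpoint strictly above it; consequently any two diagonals see each other along a horizontal line just above that bottom edge, so a hidden open diagonal set has cardinality at most one. One then only needs to check that no single diagonal sees both regions near the uppermost pair of vertices. This collapses your entire case analysis (spike lemma, opposing pairs, maximal hidden sets) to a two-line argument, and it exploits exactly the feature that makes your plan hard --- that diagonals must terminate at vertices --- by arranging the vertices so that all diagonals are forced through a common visibility region. If you pursue your spike-based route, you should at minimum supply the explicit vertex placement and the exhaustive list of diagonals per spike; otherwise the proof is not yet there.
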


\begin{figure}[ht]
\centering
\includegraphics[width=0.4\columnwidth]{./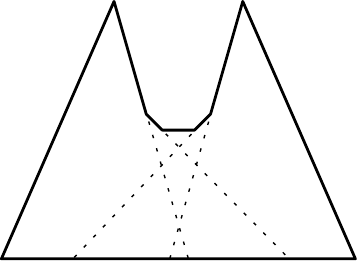}
\caption{A monotone and starshaped polygon that does not admit a hidden open diagonal guard set.}
\label{fig:open-diagonal-ss-ex}
\end{figure}

\begin{proof}
See Figure~\ref{fig:open-diagonal-ss-ex} for an example.
No single diagonal is a guard set for the polygon, as it cannot see both regions near the uppermost pair of vertices.
Every diagonal has an endpoint in common with the lowermost horizontal edge of the polygon, and every diagonal has its other endpoint at a vertex above this edge.
As a result, any pair of diagonals see each other via a horizontal line just above the lowermost horizontal edge of the polygon.
So no set of multiple diagonals can form a hidden set.
\end{proof}

\begin{lemma}
There exists an orthogonal polygon that does not admit a hidden open diagonal guard set.
\end{lemma}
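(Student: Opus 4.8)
The plan is to exhibit an explicit orthogonal polygon $P$ and to argue, exactly as in the monotone and starshaped cases above, that (i) no single diagonal is a guard set of $P$, and (ii) any two diagonals that could simultaneously belong to a guard set of $P$ must see each other; together these rule out any hidden open diagonal guard set. The most convenient way to obtain (ii) is to arrange that \emph{every} diagonal of $P$ has an endpoint on a distinguished edge $e$ of $P$ whose open neighborhood on the interior side is a thin ``corridor''; then any two diagonals are mutually visible along a segment just inside $e$, precisely the argument used for the monotone/starshaped polygon.

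Concretely, I would take $e$ to be a long horizontal bottom edge with only its two endpoints as vertices, and build above it a ``comb'' of tall thin orthogonal features that, near the top, form a small number (at least two, placed well apart) of \emph{pockets} whose deepest subregions can be seen only by diagonals rooted at $e$. For (i) I would check that the pockets are placed and oriented so that no single straight segment can peer into two of them at once (the analogue of ``it cannot see both regions near the uppermost pair of vertices''); this forces at least two diagonals into any guard set. For (ii) I would verify the structural claim that every diagonal of $P$ capable of guarding any pocket has an endpoint on $e$, after which the thin corridor just inside $e$ makes every pair of such diagonals see each other. Combining (i) and (ii) gives the lemma.

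The hard part is making the structural claim hold under the orthogonality constraint. Orthogonal features are locally box-shaped, hence locally convex, so a naive pocket is guarded by a short diagonal lying entirely inside it (invisible from everything else), and a single tall diagonal can sweep an entire convex tower together with a flat corridor at its base; either phenomenon would destroy the argument, and indeed both occur in the obvious comb and two-chamber constructions. To defeat them I would shape each pocket with orthogonal steps or a bend so that its critical subregion is nonconvex and visible only along a narrow, essentially vertical line of sight that continues down to $e$, while keeping enough pockets that no single diagonal reaches all of them and keeping the strip immediately above $e$ free of any protrusion that would block horizontal visibility there. Choosing the widths, heights, and offsets so that all of these requirements are simultaneously satisfiable is the real content; once the metric relations are fixed, the remaining case analysis over the few types of diagonals in $P$ is routine.
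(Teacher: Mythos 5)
There is a genuine gap: you never exhibit the polygon, and by your own account the ``real content'' --- choosing widths, heights, and offsets so that all of your visibility requirements hold simultaneously --- is left undone. Worse, the requirements you list are in tension with one another in ways that go beyond bookkeeping. A diagonal's endpoints are vertices, so ``every pocket-guarding diagonal has an endpoint on $e$'' really means ``has an endpoint at one of the two corners of $e$''; but you also want each pocket's critical subregion to be visible only along a narrow, essentially vertical corridor descending to $e$, and a diagonal rooted at a corner of $e$ can only lie inside such a near-vertical corridor if the pocket sits almost directly above that corner. Meanwhile, keeping the strip just above $e$ free of protrusions (so that any two rooted diagonals see each other there) means every pocket's corridor reaches $e$ unobstructed, and then a single shallow diagonal from a corner of $e$ can cross several of these corridors near their feet and see into several pockets at once --- exactly the phenomenon you say would destroy condition (i). You name these obstacles but the proposed remedy (``orthogonal steps or a bend'') is not specified concretely enough to check that it defeats them without reintroducing short interior diagonals that guard a pocket while hiding from everything else.

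It is also worth noting that the paper's argument has a different, and weaker, logical structure that sidesteps the need for your global claim (ii). Rather than arranging that \emph{all} relevant diagonals pairwise see each other via a common corridor, the paper picks a single point $l$ in a central convex region that is visible only from diagonals with an endpoint on that region (four types up to symmetry), and then shows that each such forced first guard sees \emph{every} diagonal that could guard some remaining ear while failing to cover that ear itself. This ``the forced guard dominates all candidates for the leftover region'' argument requires much less of the construction than mutual visibility of all pairs, which is precisely why it survives the orthogonality constraints that your plan struggles against. If you want to complete your proof, I would recommend either switching to this domination structure or giving explicit coordinates and verifying the full list of diagonals of your polygon.
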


\begin{figure}[ht]
\centering
\includegraphics[width=0.6\columnwidth]{./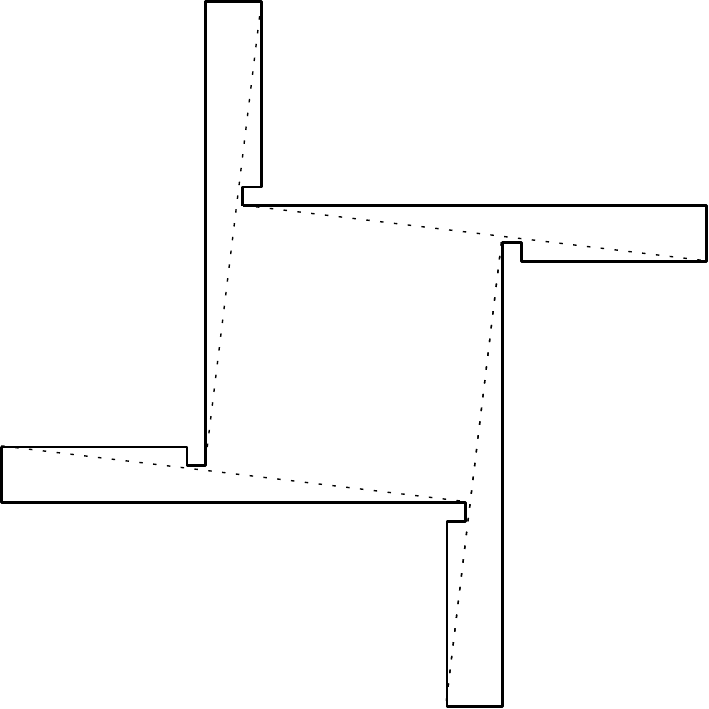}
\caption{An orthogonal polygon that does not admit a hidden open diagonal guard set.}
\label{fig:open-diagonal-ortho-ex}
\end{figure}

\begin{proof}
See Figure~\ref{fig:open-diagonal-ortho-ex}.
Any guard set must see the location in the center of the central convex region (location $l$ in Figure~\ref{fig:open-diagonal-ortho-pf-1}).
This point is only visible from a diagonal that has at least one endpoint on the central convex region.
The four possible diagonals with such an endpoint (up to symmetry) that guard $l$ are shown in Figure~\ref{fig:open-diagonal-ortho-pf-1}.

\begin{figure}[ht]
\centering
\includegraphics[width=1.0\columnwidth]{./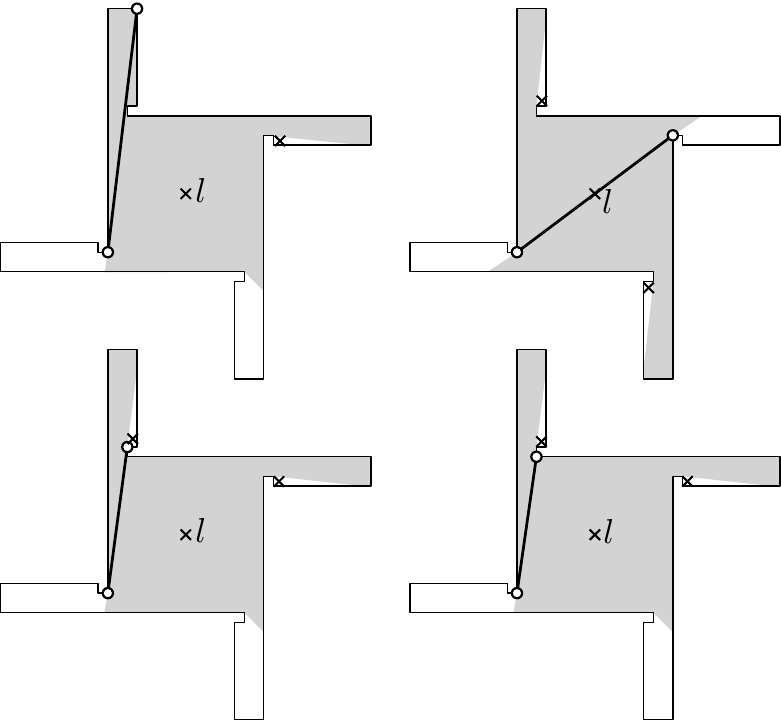}
\caption{The possibilities for selecting a open diagonal guard that sees $l$, and the resulting unguardable regions (``X'' locations in the ears of the polygon).}
\label{fig:open-diagonal-ortho-pf-1}
\end{figure}

In the first case, the diagonal sees the region neighboring $p_2$ and all diagonals that see the region neighboring $p_3$, but fails to guard the entire region around $p_3$.
In the next three cases, $p_1$ sees all diagonals that could be used to guard the region neighboring $p_2$, but does not see the region around $p_2$ entirely.
\end{proof}

\section{Open mobile guards}
\label{sec:open-mobiles}

\begin{lemma}
There exists a simple polygon that does not admit a hidden open mobile guard set.
\end{lemma}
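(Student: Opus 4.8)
The plan is to build a single simple polygon that simultaneously defeats edge guards, diagonal guards, and every mixture of the two, by merging the obstructions used in the open edge and open diagonal lemmas into one ``many spikes around a central region'' construction. Concretely, I would take a central convex region and attach to it a large number of thin spikes, grouping them into \emph{opposing pairs} exactly as in the starshaped open-edge example, and additionally carve a small \emph{tip pocket} near the apex of each spike: a convex notch deep enough and narrow enough that the only mobile guards seeing all of it are the spike's own edges together with the (few) diagonals whose both endpoints lie on that spike. The central region is kept with very few, well-spread vertices, so that the set of diagonals available to any guard is tightly constrained.

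First I would show that no single mobile guard is a guard set: a single edge or diagonal cannot see the tip pockets of two spikes lying on ``opposite sides'' of the central region, since the segment witnessing visibility would have to leave ${\rm int}(P)$ --- this is the role played by the uppermost vertex pair and the opposing-spike structure in the earlier proofs. Hence every guard set has at least two guards and the hidden constraint is active. Next I would partition candidate guards into \emph{local} guards (an edge or diagonal all of whose visibility-relevant points lie in a single spike together with the adjacent part of the central region) and \emph{global} guards (those reaching the interiors of two or more spikes). The key facts to establish, all by the same ray-shooting / pocket-depth arguments used above, are: (i) a global guard --- in particular any central edge, and any diagonal joining two different spikes --- sees at most the pocket-free part of a spike, never the whole tip pocket, so each spike's pocket must be covered by a guard local to that spike; (ii) all edges and all within-spike diagonals of a single spike pairwise see each other, so a hidden set contains at most one local guard per spike; and (iii) any two global guards, and any global guard together with a local guard of a spike it reaches, pairwise see each other, because the construction forces them through a common small central quadrilateral just inside the central region (the analogue of the ``horizontal line just above the lowermost edge'' in the open-diagonal proof).

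Combining these yields the contradiction: given any purported hidden open mobile guard set $S$, facts (i) and (ii) force $S$ to contain exactly one local guard per spike, and by making the spike mouths narrow (as the ears were chosen in the monotone example) the part of the central region not covered by these local guards is nonempty, so $S$ must also contain a global guard; by (iii) that global guard sees one of the local guards, so $S$ is not hidden. I expect the main obstacle to be items (i) and (iii) in the presence of \emph{diagonals}: unlike an edge, a diagonal can be long and is pinned only at its two vertex endpoints, so the vertex placement must be chosen so that (a) any diagonal deep enough to see a full tip pocket is forced to have both endpoints on that same spike, and (b) any diagonal spanning two spikes is forced through the central bottleneck. Finding a vertex placement making both hold at once while keeping $\partial P$ simple is the delicate part; everything else reduces to the finite case checks already illustrated in the figures of the preceding lemmas.
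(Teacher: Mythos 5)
Your approach is genuinely different from the paper's (which builds two clover-shaped gadgets joined by a narrow tunnel and disposes of all guard placements by exhaustive case analysis), but as it stands it is a proof plan rather than a proof, and the plan has concrete internal problems beyond the vertex-placement difficulty you already flag. The most serious is the tension between the tip pockets and your claim (ii). For a pocket to be invisible to every central edge and every cross-spike diagonal, it must be a notch turned away from the spike's mouth; but such a notch creates reflex vertices inside the spike, and its own edges are then typically hidden from the spike's long edges as well. That breaks ``at most one local guard per spike'': a hidden set could take one long spike edge plus one pocket edge in the same spike, and the final counting step (exactly one local guard per spike, hence a global guard is forced) collapses. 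Claim (iii) has the mirror-image problem: the local guard that actually covers the pocket is, by design, hard to see, so a global guard that only reaches the mouth of a spike has no reason to see it; you need the global guard to see \emph{some} chosen local guard, and nothing pins down which local guard is chosen. Finally, claim (i) needs to be that some point of each pocket is seen by \emph{no} non-local guard (an ``X'' point in the paper's terminology), not merely that no single non-local guard sees the whole pocket, since coverage may be split among several guards.

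There is also a structural reason to distrust any ``central convex region with spikes'' construction here: the paper's positive results for monotone and starshaped polygons (Lemmas~\ref{lem:open-mobile-monotone} and~\ref{lem:open-mobile-ss}) rest on Observation~\ref{obs:geodesic-hidden}, that the open edges of any geodesic path form a hidden mobile set. Your polygon is starshaped except for the tip pockets, so a geodesic winding around the central region guards everything but the pockets, and one deep pocket edge per spike --- mutually hidden precisely because the pockets are turned away from the center --- plausibly completes a hidden guard set. You would need to argue explicitly that no geodesic-plus-pocket-edges set works, and nothing in the proposal does so. Since the entire content of this lemma is exhibiting a concrete polygon and verifying the finitely many cases (the paper spends four figures on this), the missing construction is not a detail to be filled in later; it is the theorem.
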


\begin{figure}[ht]
\centering
\includegraphics[width=1.0\columnwidth]{./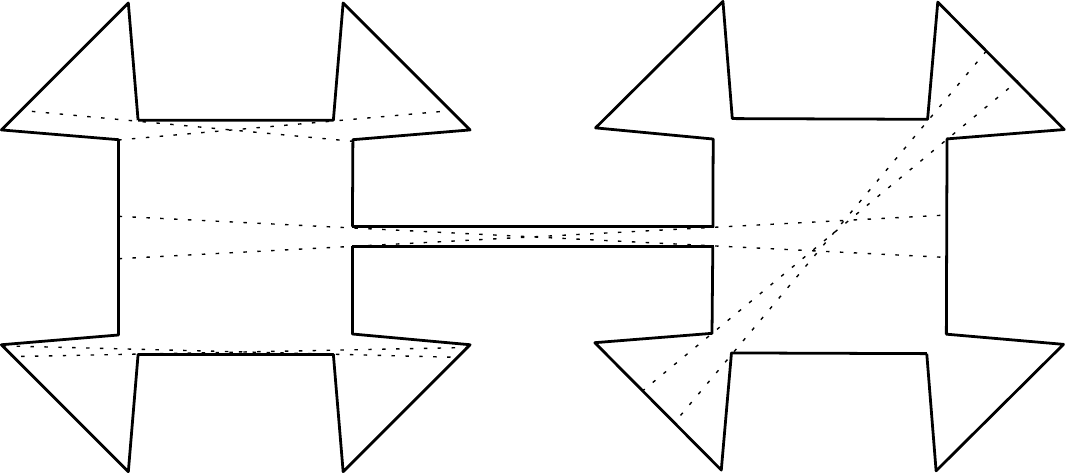}
\caption{A simple polygon that does not admit a hidden open mobile guard set.}
\label{fig:open-mobile-simple-ex}
\end{figure}

\begin{proof}
See Figure~\ref{fig:open-mobile-simple-ex}.
The polygon consists of a narrow \emph{tunnel} connecting left and right clover-shaped \emph{gadgets} that are identical and each have $\pi/2$ rotational symmetry.
Consider guarding the the narrow \emph{tunnel} connecting the left and right clover-shaped regions (which we call \emph{gadgets}).
Guarding the tunnel must occur via either an edge or diagonal in the tunnel or a guard in the left or right gadget.
This results in either the left or right gadget falling into one of the three cases seen in Figure~\ref{fig:open-mobile-simple-pf-1}. 

\begin{figure}[ht]
\centering
\includegraphics[width=1.0\columnwidth]{./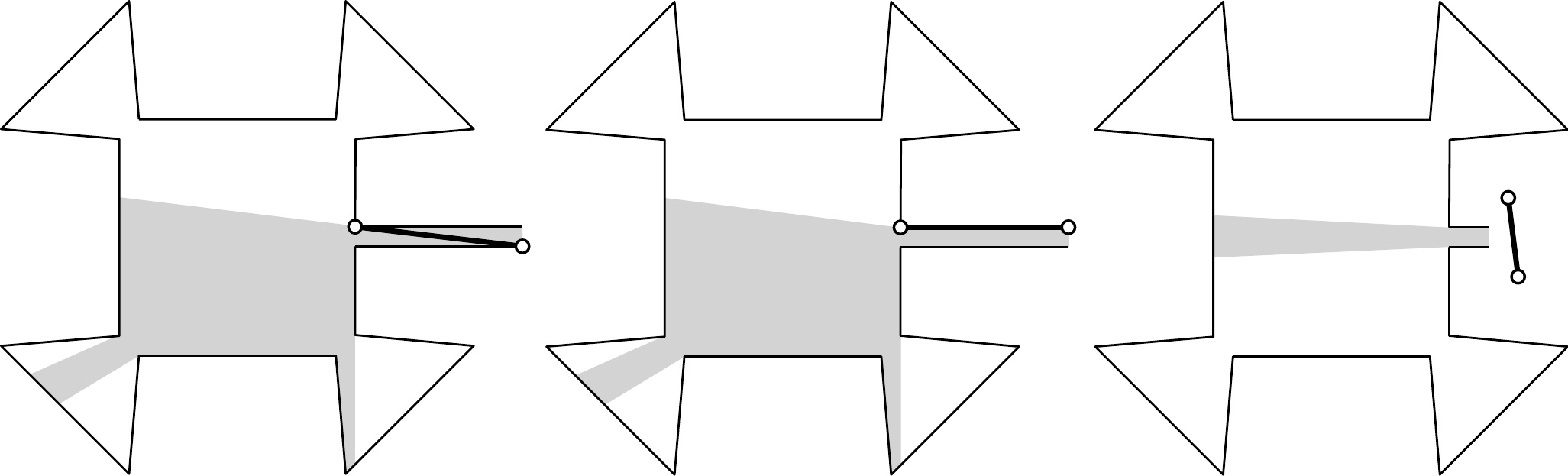}
\caption{The three cases resulting from guarding the tunnel.}
\label{fig:open-mobile-simple-pf-1}
\end{figure}

Because the guard in the left or right gadget can only see locations in the other gadget through the tunnel, the placement of the guard seeing through the tunnel cuts off further illumination of one gadget from guards in the other.
This yields a subproblem of guarding the remainder of either the left or right gadget using only guards in the gadget.

\begin{figure}[ht]
\centering
\includegraphics[width=1.0\columnwidth]{./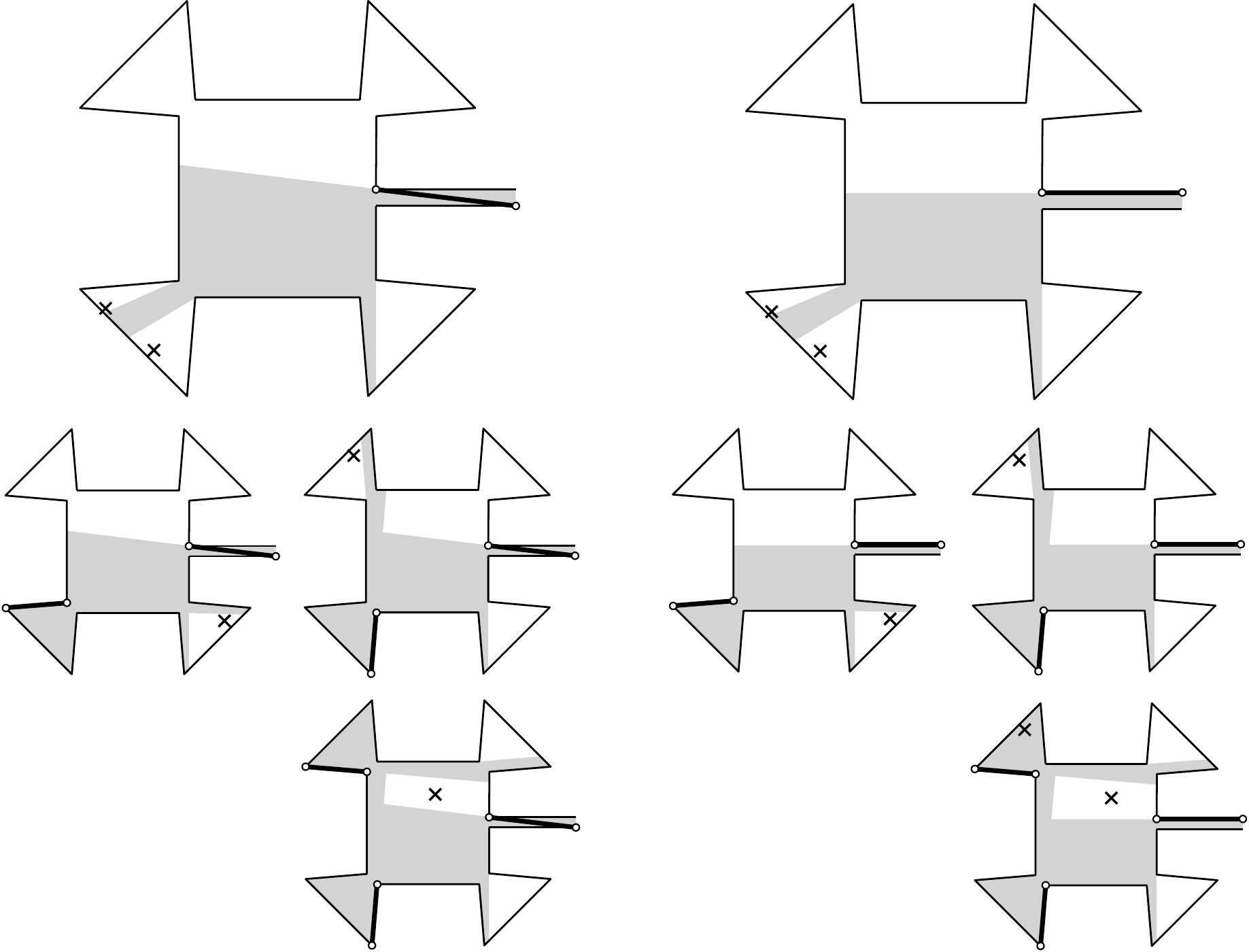}
\caption{Analysis of the first two cases of Figure~\ref{fig:open-mobile-simple-pf-1}.
In both cases, two minimal hidden guard sets see a pair of ``X'' locations in the lower left ear, yielding two subcases.
One subcase results in an unguardable region in the lower right ear.
The other subcase results in an unguardable region in the center of the polygon after selecting an edge guard in the upper left ear needed to see an ``X'' location in the ear.} 
\label{fig:open-mobile-simple-pf-2}
\end{figure}

The first two cases of placing tunnel guards are analyzed in Figure~\ref{fig:open-mobile-simple-pf-2}.
The third case is analyzed in two phases: first, the subcases involving the use of a diagonal guard or an edge guard incident to the central convex region of the gadget are enumerated and each case is shown to yield a set of two or three points that cannot be guarded using any hidden guard set containing the selected guard (Figure~\ref{fig:open-mobile-simple-pf-3}).
Second, the subcases of guarding the gadget using only guards not considered in the first phase (edge guards in the ears of the gadget) are enumerated.
In some of these cases, a sequence of necessary guards are selected in the situation that a location is seen by only one guard hidden from previously-selected guards.
 
\begin{figure}[ht]
\centering
\includegraphics[width=1.0\columnwidth]{./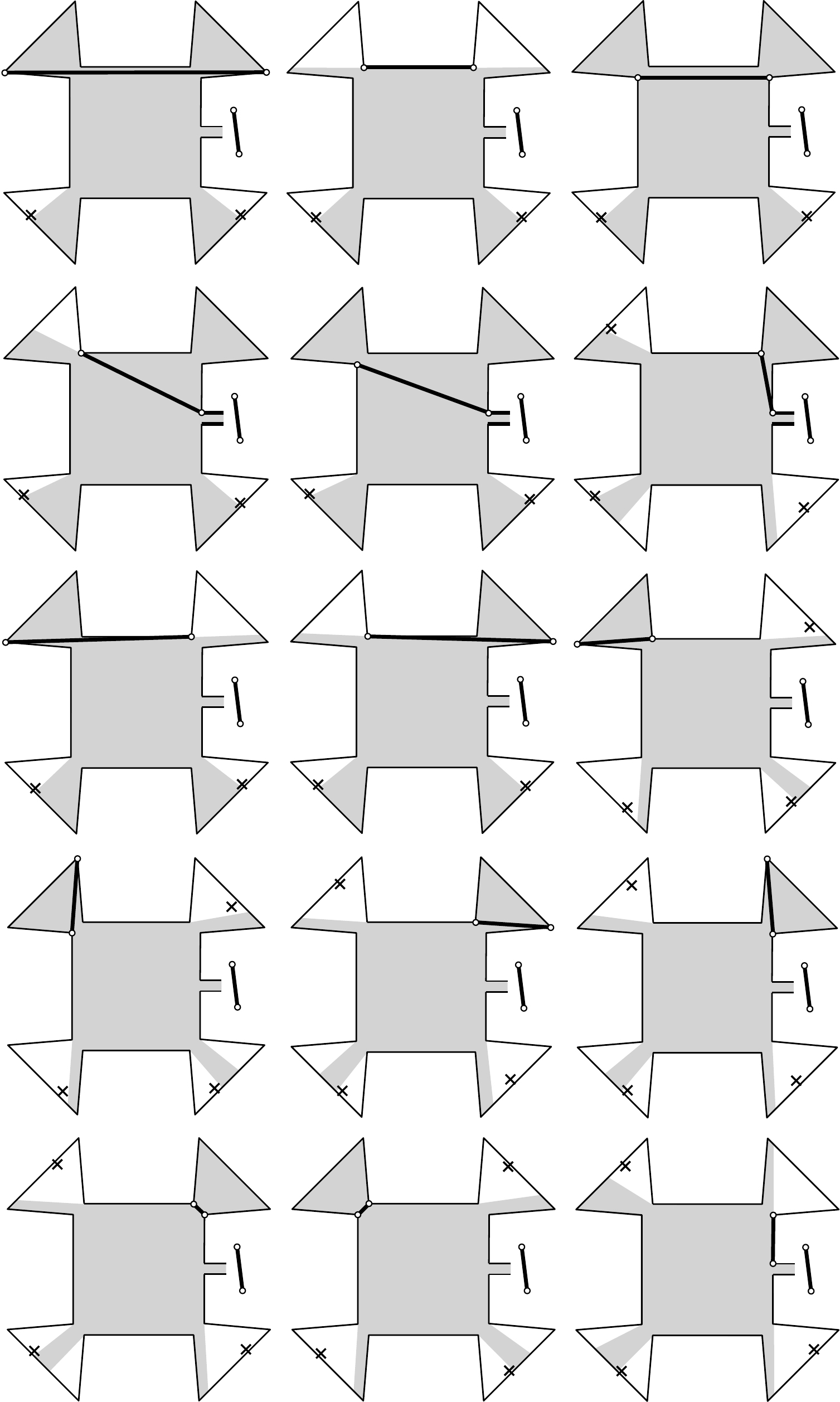}
\caption{The first phase of subcase analysis of the third case of Figure~\ref{fig:open-mobile-simple-pf-1}.
Each case enumerates a possible diagonal of the polygon, or an edge incident to the center convex region.
The ``X'' locations in each scenario form a set of points that cannot all be guarded using a hidden guard set containing the existing guard.}
\label{fig:open-mobile-simple-pf-3}
\end{figure}

\begin{figure}[ht]
\centering
\includegraphics[width=1.0\columnwidth]{./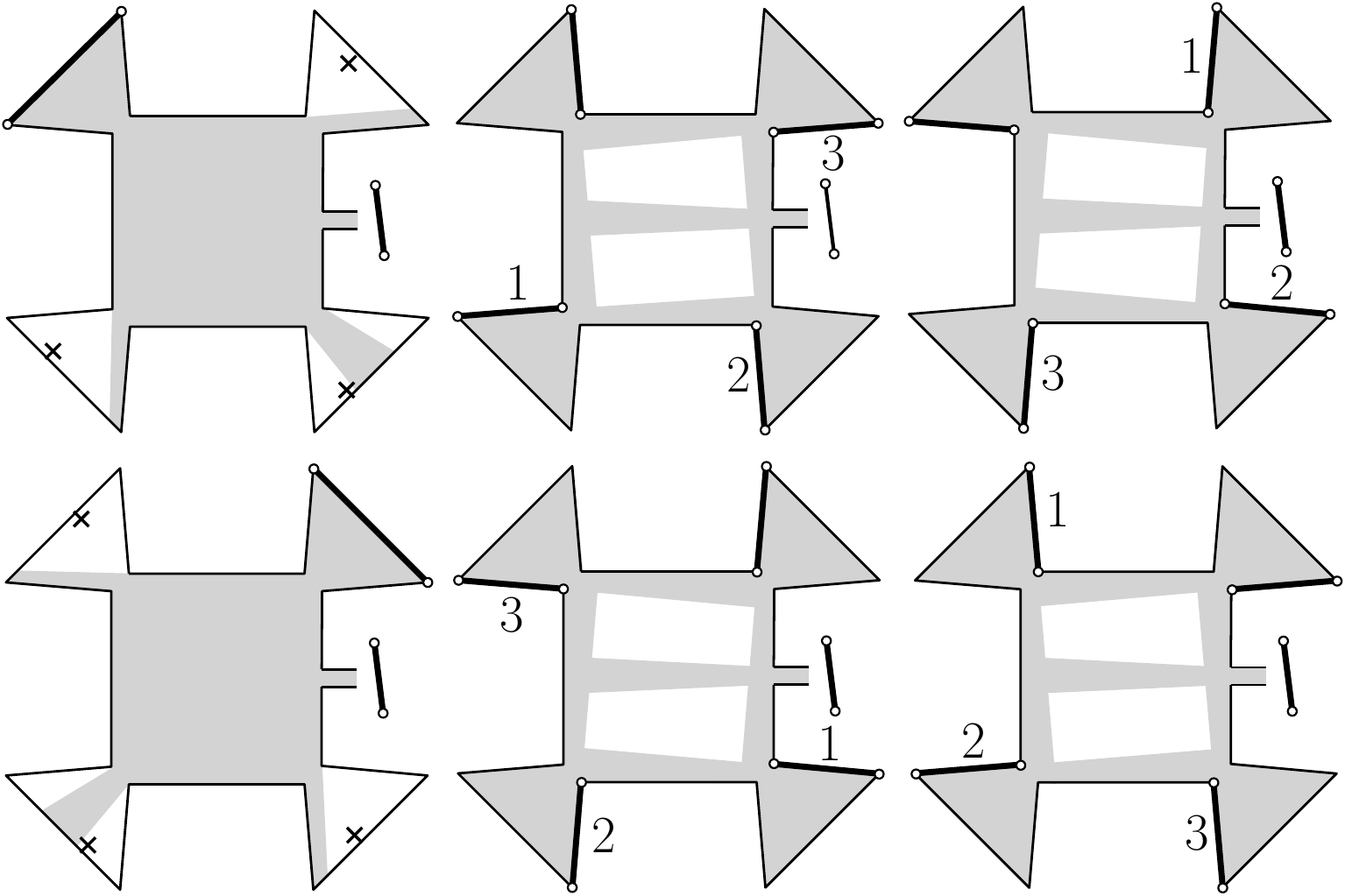}
\caption{The second phase of subcase analysis of the third case of Figure~\ref{fig:open-mobile-simple-pf-1}.
Each case enumerates a possible selection of an ear edge and resulting sequence of necessary edge guards selected.
In the two left cases, the three ``X'' locations form a set that cannot be guarded using any hidden guard set containing the initial guard.
In the remaining cases, necessary guards are selected (numbered in order of selection) until a maximal guard set is formed with regions remaining unguarded.}
\label{fig:open-mobile-simple-pf-4}
\end{figure}

\end{proof}

\begin{obs}
\label{obs:geodesic-hidden}
Let $g$ be a geodesic path between a pair of vertices $p, q$ in a polygon $P$.
Then the interiors of the edges $g$ form a set of hidden open mobile guards in $P$.
\end{obs}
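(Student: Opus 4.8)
The plan is to use the defining minimality of a geodesic together with the classical structure theorem for shortest paths in simple polygons. Write $g = (p = w_0, w_1, \dots, w_m = q)$, so the claimed guard set is the collection of open edges ${\rm int}(\seg{w_{\ell-1}w_\ell})$, $1 \le \ell \le m$; for $m \le 1$ there is at most one guard and nothing to prove, so assume $m \ge 2$. Each edge of $g$ is a diagonal of $P$, or an edge of $P$ when its endpoints are adjacent on $\partial P$, so each is a legitimate open mobile guard; I will argue that they are pairwise hidden. I would rely on two facts: (i) $g$ is the shortest path from $p$ to $q$ among all paths contained in $\overline{P}$; and (ii) every intermediate vertex $w_\ell$ ($1 \le \ell \le m-1$) is a reflex vertex of $P$ at which $g$ makes a genuine turn — in particular, no sub-path of $g$ that passes through an intermediate vertex is a straight segment.

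Assume for contradiction that two of the open edges see each other, say $e = \seg{w_{i-1}w_i}$ and $f = \seg{w_{j-1}w_j}$ with $i < j$, witnessed by $a \in {\rm int}(e)$, $b \in {\rm int}(f)$ with $\seg{ab} - \{a,b\} \subseteq {\rm int}(P)$. The first thing I would observe is that $\seg{ab}$ passes through no vertex of $P$ — in particular not through $w_i$ — since otherwise its relative interior would meet $\partial P$, contradicting $\seg{ab} - \{a,b\} \subseteq {\rm int}(P)$. Next I would build a competitor path $\gamma$ from $p$ to $q$: follow $g$ from $p$ to $a$ (ending with the sub-segment $\seg{w_{i-1}a}$ of $e$), then the segment $\seg{ab}$, then $g$ from $b$ to $q$ (starting with $\seg{b\,w_j}$). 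Since the prefix and suffix lie in $\overline{P}$ and $\seg{ab} - \{a,b\} \subseteq {\rm int}(P)$, the whole path $\gamma$ lies in $\overline{P}$.

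Finally I would compare lengths. The paths $g$ and $\gamma$ share the prefix from $p$ to $a$ and the suffix from $b$ to $q$, and differ only between $a$ and $b$: in $\gamma$ this piece is the single segment $\seg{ab}$, while in $g$ it is the sub-path through $w_i, \dots, w_{j-1}$, which, since $i < j$, contains the intermediate vertex $w_i$, at which $g$ turns. Hence that sub-path of $g$ is not a straight segment, so its length strictly exceeds $|\seg{ab}|$, and therefore $\gamma$ is a path from $p$ to $q$ in $\overline{P}$ strictly shorter than $g$ — contradicting (i). Consequently no two open edges of $g$ see each other, so they form a hidden set, proving the observation.

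The step I expect to be the crux is invoking (ii): that $g$ genuinely turns at $w_i$ rather than running straight through it. This is precisely the content of the classical characterization of geodesics in simple polygons (every path vertex is a reflex polygon vertex the path bends around; see, e.g., \cite{ORourke-1987}); I would cite it rather than reprove it. The only other point needing care is the remark that $\seg{ab}$ avoids $w_i$, which eliminates the borderline case where $g$ and $\gamma$ could have equal length; the remaining bookkeeping — that $\gamma \subseteq \overline{P}$ and that the length comparison telescopes correctly — is routine.
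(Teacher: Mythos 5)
The paper states this as a bare observation with no accompanying proof, so there is no argument of the authors' to compare yours against. Your shortcut argument --- replacing the bend of $g$ at the intermediate reflex vertex $w_i$ by the witness segment $\seg{ab}$ (which misses $w_i$ because its relative interior lies in ${\rm int}(P)$) to produce a strictly shorter path from $p$ to $q$ in $\overline{P}$ --- is correct and is the standard justification one would supply; the only loose end, that each geodesic edge is genuinely an edge or diagonal of $P$ (i.e.\ ruling out degenerate collinearities where an edge's interior grazes $\partial P$), is glossed over by the paper as well.
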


We refer to such a guard set for a path $g$ as the \emph{open mobile guard set induced by $g$}.

\begin{lemma}
\label{lem:open-mobile-monotone}
Every monotone polygon admits a hidden open mobile guard set.
\end{lemma}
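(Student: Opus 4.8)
The plan is to cover the whole polygon with the open mobile guard set induced by a single, carefully chosen geodesic, so that hiddenness comes for free from Observation~\ref{obs:geodesic-hidden}. Assume without loss of generality that $P$ is $x$-monotone, and let $p$ be a leftmost vertex of $P$ and $q$ a rightmost vertex. Let $g$ be the shortest path from $p$ to $q$ inside $P$ (a polygonal path whose interior vertices are reflex vertices of $P$, so that each of its edges is an edge or a diagonal of $P$), and let $S$ be the open mobile guard set induced by $g$. By Observation~\ref{obs:geodesic-hidden}, $S$ is hidden, so the entire content of the lemma reduces to the claim that $S$ guards ${\rm int}(P)$.

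The structural fact I would use about $x$-monotone polygons is that, for each $x$-value $c$ strictly between those of $p$ and $q$, the set $\{\,y : (c,y)\in P\,\}$ is a single closed interval $[a_c,b_c]$, and the relatively open vertical chord $\{c\}\times(a_c,b_c)$ lies in ${\rm int}(P)$ (after discarding any portion running along a vertical boundary edge, a degeneracy I would remove by a general-position assumption or absorb into a perturbation argument). In particular every point $z\in{\rm int}(P)$ has $x$-coordinate $c\in(x(p),x(q))$ and lies in the relative interior of its vertical chord.

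The core argument is a one-line sweep. Fix $z\in{\rm int}(P)$ and let $\ell$ be the vertical line through $z$, with $c=x(z)$. The path $g$ is connected, lies in $P$, and runs from a point with $x$-coordinate $x(p)\le c$ to a point with $x$-coordinate $x(q)\ge c$, so $g$ meets $\ell$; pick $w\in g\cap\ell$. Since $w\in P\cap\ell$ we have $w\in\{c\}\times[a_c,b_c]$, and since $z$ lies in $\{c\}\times(a_c,b_c)$, the vertical segment $\seg{zw}$ is contained in that chord with its interior in ${\rm int}(P)$. If $w$ lies in the interior of some edge of $g$, that edge is a guard of $S$ that weakly sees $z$. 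If $w$ is a vertex of $g$ (possibly $p$ or $q$), slide $w$ a tiny distance along $g$ to a point $w'$ in the interior of an incident edge of $g$; the interior of $\seg{zw}$ is a compact subset of the open set ${\rm int}(P)$, so the interior of $\seg{zw'}$ still lies in ${\rm int}(P)$ once $w'$ is close enough, and that edge sees $z$. Either way $z$ is guarded, so $S$ is a hidden open mobile guard set for $P$.

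The sweep itself is robust; I expect the only real work to be in the degeneracies — vertical boundary edges, multiple vertices sharing an $x$-coordinate, and the geodesic running along $\partial P$ — where one must check that the crossing point $w$ really sits on a piece of the chord whose connection to $z$ stays inside ${\rm int}(P)$. The minor point that every edge of $g$ is a bona fide mobile guard is already packaged into Observation~\ref{obs:geodesic-hidden}, so it need not be re-proved here.
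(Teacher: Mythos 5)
Your proposal is correct and follows essentially the same route as the paper: take the geodesic between a leftmost and a rightmost vertex, invoke Observation~\ref{obs:geodesic-hidden} for hiddenness, and show coverage by intersecting the vertical chord through an arbitrary interior point with the geodesic, perturbing slightly when that intersection is a vertex. The only difference is cosmetic — you spell out the degeneracy handling a bit more explicitly than the paper does.
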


\begin{proof}
Assume without loss of generality that the input polygon $P$ is x-monotone, i.e. the intersection of every vertical line with the polygon consists of a single line segment $s$.
Consider a geodesic $g$ path between a pair of left- and rightmost vertices in $P$.
Since $g$ is a path between extreme vertices in the x-direction, $s$ contains a point $p \in g$.
As $p$ sees all of $s$, $g$ guards the entirety of $P$.

Let $G$ be the hidden open mobile guard set induced by $g$.
The vertical segment $s$ intersects either a point on a guard in $G$, or a vertex in $V$.
If $s$ intersects a vertex $v \in V$, then since the boundary of $P$ is simple (i.e. non-intersecting) there must be a nearly-vertical segment for each location $l \in s$ connecting $l$ to a location in the interior of an edge of $g$ without intersecting $\partial P$.
So $G$ forms a guard set for $P$.
\end{proof}

For a polygon with $n$ edges, computing such a guard set can be done in $O(n)$ time: finding left- and rightmost points in $P$ takes $O(n)$ time, and finding the shortest path between them also takes $O(n)$ by combining the results of Garey et al.~\cite{Garey-1978} and Guibas et al.~\cite{Guibas-1987}.

A natural approach to finding an open mobile guard set for a starshaped polygon is to look for a mobile guard that intersects the \emph{kernel} of the polygon.
Unfortunately such a guard may not exist as noted in~\cite{ORourke-1987} (see Figure~\ref{fig:no-mobile-in-kernel}). 

\begin{figure}[ht]
\centering
\includegraphics[width=0.4\columnwidth]{./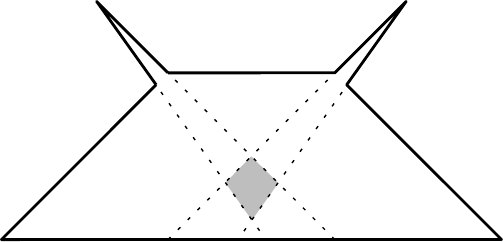}
\caption{A starshaped polygon with no edge or diagonal intersecting its kernel (gray).}
\label{fig:no-mobile-in-kernel}
\end{figure}

The following lemma is used in the proof of Lemma~\ref{lem:open-mobile-ss}. 

\begin{lemma}
\label{lem:a-mobile-sees-all}
Let $P$ be a starshaped polygon translated so that the origin lies in the kernel of $P$, and let $v, v'$ be consecutive reflex vertices such that angle between the rays from $v$ and $v'$ through the origin (sweeping from $v$ to $v'$) is at most $\pi$.
If a geodesic path $g \in P$ intersects both rays either before or after they intersect the origin, then $g$ guards the subpolygon $R$ bounded by the portions of the two rays before they intersect the origin, and the portion of $\partial P$ from $v$ to $v'$.
\end{lemma}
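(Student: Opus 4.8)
The plan is to prove the statement by showing directly that every point $x$ of $R$ is weakly visible from $g$. For each such $x$ I will exhibit a point $p\in g$ lying on the chord $I_x$ cut out of $P$ by the line through $x$ and the origin $O$; since $I_x\subseteq P$, the subsegment $\overline{xp}\subseteq I_x$ then witnesses that $g$ sees $x$. The only feature of $g$ this uses is that $g$ is a connected subset of $P$ meeting the two rays as prescribed — the geodesic hypothesis serves only to make ``$g$ meets $\rho_v$ before (resp.\ after) the origin'' meaningful, since a shortest path meets any chord of $P$ in a connected set, so $g$ cannot meet the chord through $v$ and $O$ on both sides of $O$ unless $O\in g$, in which case $O$ sees all of $P\supseteq R$ and we are done. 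So assume $O\notin g$, write $\rho_v,\rho_{v'}$ for the two rays, and fix intersection points $a\in g\cap\rho_v$ and $b\in g\cap\rho_{v'}$ — each lying on the chord of $P$ through the corresponding vertex and $O$, either strictly between that vertex and $O$ (the ``before'' case) or strictly beyond $O$ (the ``after'' case).

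I would first record the polar structure of $P$ about $O$: since $O$ is a kernel point, the ray from $O$ in any direction meets $\partial P$ exactly once (were there two intersections, $O$ could not see the farther one). Two consequences follow. (i) For any $y\in P$, the line through $y$ and $O$ meets $P$ in a single segment $I_y$; this segment passes through $O$, contains $y$, has its relative interior in ${\rm int}(P)$, and as a chord of $P$ it separates $P$ into two pieces, each of which is the intersection of $P$ with an open half-plane bounded by that line. (ii) The angular position around $O$ is monotone along $\partial P$, so the boundary arc $C$ bounding $R$ stays in the closed sector $\overline S$ between the rays $\overline{Ov}$ and $\overline{Ov'}$; hence $R\subseteq\overline S\cap P$.

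Now fix $x\in R$. If $x$ lies on $\overline{vO}$ (the case $x\in\overline{Ov'}$ is symmetric), then $I_x=I_v$, and since $a$ lies on $\overline{va}=\overline{vO}\cup\overline{Oa}\subseteq P$ it lies on $I_v=I_x$; then $\overline{xa}\subseteq I_x$ shows $g$ sees $x$. Otherwise the direction of $x$ from $O$ is strictly between those of $v$ and $v'$, which is the main case. The key point is that, because the angular span from $v$ to $v'$ is at most $\pi$, the vertices $v$ and $v'$ lie on \emph{opposite} open sides of the line through $x$ and $O$, hence in the two different pieces $Q_v\ni v$ and $Q_{v'}\ni v'$ into which $I_x$ separates $P$. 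I would then locate $a$ and $b$: in the ``before'' configuration $a\in\overline{vO}\setminus\{O\}$ and $b\in\overline{Ov'}\setminus\{O\}$, so $a\in Q_v$ and $b\in Q_{v'}$; in the ``after'' configuration $a$ and $b$ lie on the extensions of these segments past $O$, hence on the sides opposite $v$ and $v'$ respectively, so $a\in Q_{v'}$ and $b\in Q_v$. Either way $g$ contains a point of $Q_v$ and a point of $Q_{v'}$; being connected and contained in $P=Q_v\cup I_x\cup Q_{v'}$, it must meet $I_x$ at some point $p$. Then $x,p\in I_x\subseteq P$, so $\overline{xp}\subseteq I_x$ and $g$ sees $x$. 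As $x$ was arbitrary, $g$ guards $R$.

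The step I expect to be the main obstacle is the angular bookkeeping in the generic case: checking that $v$ and $v'$ genuinely fall on opposite sides of the line $Ox$ (this is exactly where the hypothesis ``span at most $\pi$'' enters), and verifying in both the ``before'' and ``after'' configurations that $a$ and $b$ land in the two different pieces $Q_v,Q_{v'}$ — a ``mixed'' configuration would place $a$ and $b$ in the same piece and defeat the argument, which is why the lemma excludes it. The remaining ingredients — each ray from $O$ meeting $\partial P$ exactly once, $I_x$ being a chord that separates $P$, and $R\subseteq\overline S\cap P$ being a genuine subpolygon — are routine consequences of $O$ lying in the kernel.
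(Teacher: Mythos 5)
Your proof is correct, and it takes a cleaner, more unified route than the paper's. The paper splits into two cases: for the ``before'' configuration it observes that $R$ is convex (here it uses that $v,v'$ are \emph{consecutive reflex} vertices, so the boundary arc between them is convex, together with the angle bound at the origin) and that $g$ must enter $\overline{R}$, so every $l\in R$ sees that entry point; for the ``after'' configuration it argues that a ray from $l$ through a small disk around the origin must hit $g$ before hitting $\partial P$, which is essentially your chord argument but justified only by an informal $\varepsilon$-perturbation appeal. You instead handle both configurations with a single separation argument: the maximal chord $I_x$ through $x$ and the kernel point $O$ splits $P$ into exactly two pieces (each being $P$ intersected with an open half-plane, connected because $P$ is starshaped about $O$), the angle bound of at most $\pi$ places $v$ and $v'$ --- and hence the two prescribed intersection points $a,b$ of $g$ with the rays, in either the ``before'' or the ``after'' configuration --- in opposite pieces, and connectivity of $g$ forces it to cross $I_x$. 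This buys you two things: you never need the convexity of $R$ (so you do not actually use that $v,v'$ are consecutive reflex vertices, making your statement slightly more general), and you replace the paper's hand-wavy $\varepsilon$-disk step with a precise Jordan-type separation argument; it also makes explicit exactly where the ``at most $\pi$'' hypothesis and the exclusion of the mixed before/after configuration enter. The only shared blemish is the degenerate situation where an edge of $P$ is collinear with the line through $O$ and $x$, in which case the witness segment may run along $\partial P$ rather than through ${\rm int}(P)$; the paper's proof ignores this as well, so I would not count it against you.
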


\begin{proof}
See Figure~\ref{fig:a-mobile-sees-all}.
Because $k$ sees every point in $P$, the rays form angles of at most $\pi$ with both edges incident to $v$ and $v'$ on $\partial P$, and do not intersect $\partial P$ before reaching the origin.
Define the points on $\partial P$ intersected by the rays from $v$ and $v'$ as $i$ and $i'$, respectively.
Let $l$ be a location in the subpolygon $R$.

\begin{figure}[ht]
\centering
\includegraphics[width=0.6\columnwidth]{./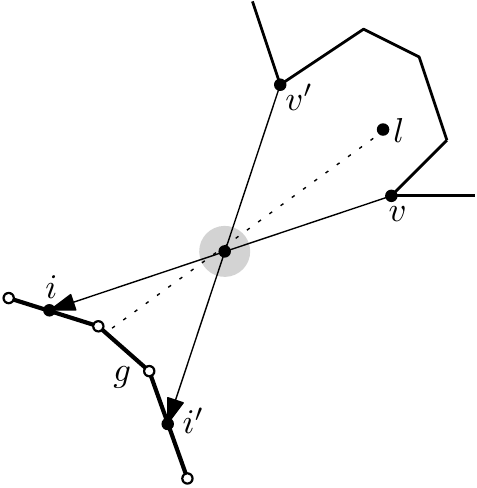}
\caption{A schematic for the proof of Lemma~\ref{lem:a-mobile-sees-all}.}
\label{fig:a-mobile-sees-all}
\end{figure}

Recall that the portion of $\partial P$ between $v$ and $v'$ has only convex vertices, and the angle formed by the two rays at the origin is at most $\pi$ by assumption, so $R$ is convex.
If $g$ intersects the rays before they intersect the origin, then the interior of an edge in $g$ must intersect $R$ and $l$ is seen.
Now suppose that $g$ intersects the rays after they leave the origin.
Because $g \in P$ and the boundary of the polygon is simple, there exists some $\varepsilon > 0$ such that any ray from $l$ through an $\varepsilon$-disk around the origin intersects $g$ before (or possibly as) the ray intersects $\partial P$.
So a point on a guard in the hidden open mobile guard set induced by $g$ sees $l$ and thus this set is a guard set for $R$.
\end{proof}

\begin{lemma}
\label{lem:open-mobile-ss}
Every starshaped polygon admits a hidden open mobile guard set.
\end{lemma}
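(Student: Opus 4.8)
The plan is to reduce everything to producing one well-chosen geodesic. Concretely, I will exhibit a geodesic path $g$ between two vertices of $P$ whose induced open mobile guard set already guards $\mathrm{int}(P)$; hiddenness is then automatic by Observation~\ref{obs:geodesic-hidden}, so the entire content becomes a guarding statement about a single geodesic, for which Lemma~\ref{lem:a-mobile-sees-all} is the main tool.

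First translate $P$ so that a kernel point $o$ is the origin. A convex polygon is weakly guarded by any one of its edges, which is a (trivially) hidden guard set; more generally polygons with at most one reflex vertex are easily handled, so assume $P$ has reflex vertices $v_1,\dots,v_k$ with $k\ge 2$, listed in counterclockwise angular order around $o$. Let $\rho_i$ be the ray from $v_i$ through $o$, and let $R_i$ be the pie slice bounded by $\overline{ov_i}$, $\overline{ov_{i+1}}$ and the arc of $\partial P$ from $v_i$ to $v_{i+1}$ (indices mod $k$). Since $o$ lies in the kernel, each $\overline{ov_i}$ lies in $P$, so the slices $R_1,\dots,R_k$ tile $P$; the arc bounding $R_i$ has only convex vertices, so $R_i$ is convex whenever its apex angle at $o$ is at most $\pi$, which fails for at most one slice --- the one spanning the widest angular gap --- and that one I split by an auxiliary ray $\overline{om}$ into two convex sub-slices. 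Lemma~\ref{lem:a-mobile-sees-all} (together with its verbatim analogue in which a bounding ``reflex vertex'' is replaced by the subdivision point $m$, the proof of which uses only convexity of the sub-slice and the fact that $o$ sees its entire arc) then says $g$ guards $R_i$ as soon as $g$ meets both rays bounding $R_i$. Moreover $g$ guards any convex slice through one of whose corners it passes, since every point of such a slice sees that corner. So it suffices to build a geodesic $g$ that, slice by slice, either passes through a corner of the slice or meets both of its bounding rays.

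For the construction I would take $g$ to be the geodesic between the two reflex vertices $v_a,v_b$ bounding the widest gap. If $g$ passes through $o$ we are done immediately, since $o\in g$ sees all of $P$; otherwise the two facts to establish are: (i) as $g$ is traversed, its direction as seen from $o$ is monotone, sweeping exactly the angular interval between $v_a$ and $v_b$ that contains every other reflex vertex --- a reversal of this angle would produce a radial shortcut contradicting that $g$ is a shortest path --- so $g$ meets $\rho_i$ along $\overline{ov_i}$ for each $i$; and (ii) the widest slice, whose extreme corners are precisely $v_a$ and $v_b$, is covered by the subdivision above, one sub-slice seeing $v_a\in g$ and the other $v_b\in g$.

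The hard part is (i): pinning down how $g$ winds relative to $o$. If a reflex vertex $v_i$ ``pokes in'' very close to $o$, then at the direction of $v_i$ the path $g$ may lie beyond $v_i$, so $g$ neither bends at $v_i$ nor crosses $\overline{ov_i}$; one must then show $g$ still meets $\rho_i$ on the far side of $o$, or reselect the endpoints of $g$. I expect the clean resolution to be an induction --- peel off the widest slice (guarded by an endpoint of $g$ as above) and recurse on the remaining starshaped polygon, which still has $o$ in its kernel and one fewer ``problem'' slice --- rather than a single fixed geodesic; carrying that induction through while keeping the pieces simple polygons with a common kernel point is where I expect the bulk of the work to lie.
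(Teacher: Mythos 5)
Your architecture is the paper's: put a kernel point at the origin, shoot rays from the reflex vertices through it, cut $P$ into slices, guard all but the exceptional slices via Lemma~\ref{lem:a-mobile-sees-all} applied to a single geodesic between two reflex vertices, and patch the leftover wedges. But the one step that makes this work is exactly the step you leave open. Lemma~\ref{lem:a-mobile-sees-all} needs the geodesic to meet \emph{both} rays bounding a slice on the \emph{same} side of the origin, and your claim (i) --- that the geodesic between the endpoints of the widest angular gap winds monotonically and crosses each $\overline{ov_i}$ --- is, as you yourself note, false when a reflex vertex pokes in past the geodesic. Your fallback (``show $g$ still meets $\rho_i$ on the far side of $o$, or reselect the endpoints, or induct'') is not carried out, and the consistency of sides across the two rays of a single slice is never addressed. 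The paper resolves precisely this by choosing the geodesic's endpoints $u,u'$ differently: not the widest angular gap, but a \emph{double wedge} determined by a consecutive ray-start/ray-termination pair along $\partial P$, so that every other ray lies entirely outside the double wedge and the geodesic from $u$ to $u'$ therefore meets each remaining consecutive pair of rays either both before or both after the origin. That selection is the missing idea, and without it (or a completed induction) the proof does not go through.

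A second, smaller gap: you dispose of the slices incident to the geodesic's endpoints by saying every point of such a convex slice ``sees that corner.'' The guard set induced by $g$ is \emph{open} --- it consists of the interiors of the edges of $g$ --- so seeing the endpoint $v_a$ does not imply seeing a guard. If the geodesic edge incident to $v_a$ makes an interior angle greater than $\pi$ with the slice, part of the slice sees $v_a$ but no interior point of that edge; the paper handles this case explicitly by extending the geodesic with an edge of $\partial P$ inside the offending wedge. You would need the same patch.
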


\begin{proof}
Let $P$ be a given starshaped polygon translated so that the origin lies in the kernel of $P$.
Consider shooting rays from each reflex vertex through the origin as seen in the left portion of Figure~\ref{fig:ray-wheel-1}.
Find a double wedge $W$ formed by a consecutive pair of these rays such that each wedge is coincident to exactly one reflex vertex (which we call $u$ and $u'$) as seen in right portion of Figure~\ref{fig:ray-wheel-1} as a dark gray region. 
Such a double wedge is formed by every pair of consecutive intersections of rays along $\partial P$ such that one intersection is the start of a ray (at a reflex vertex of $P$), and the other is the termination of a ray.

\begin{figure}[ht]
\centering
\includegraphics[width=1.0\columnwidth]{./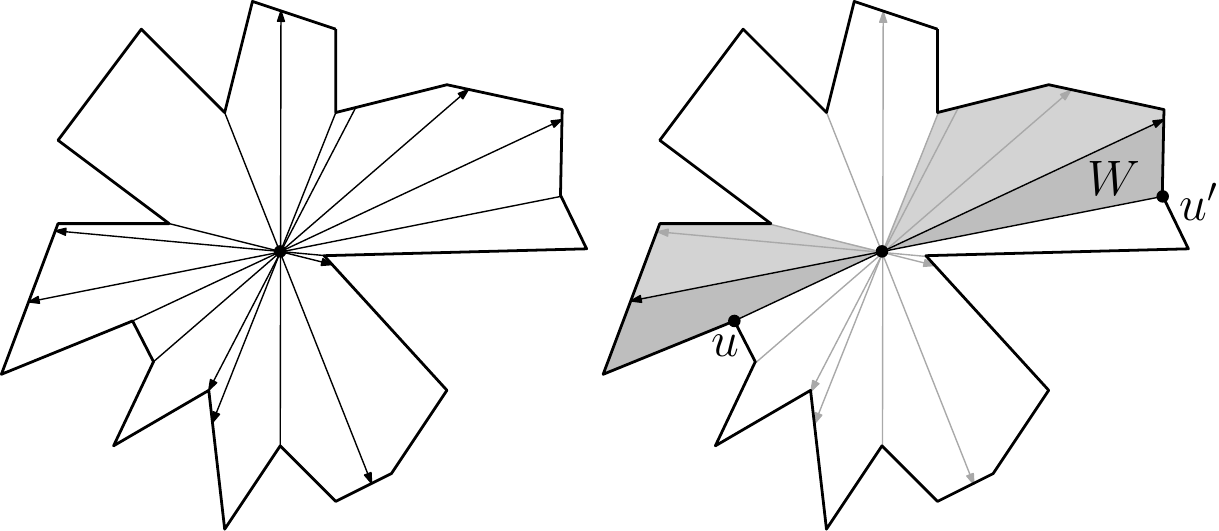}
\caption{Left: a starshaped polygon with rays from each reflex vertex through the origin.
Right: the polygon and a double wedge $W$ (dark gray) with one reflex vertex ($u$ or $u'$) incident to each wedge.
The light and dark gray regions together form the subpolygons possibly left unguarded by the hidden open mobile guard set induced by a geodesic path from $u$ to $u'$.}
\label{fig:ray-wheel-1}
\end{figure}

For every consecutive pair of reflex vertices $v, v'$ on $\partial P$, the rays from $v$ and $v'$ through the origin lie entirely in $P - W$.
Two pairs are an exception: the two pairs containing $u$ and $u'$ that form a pair of wedges, each containing half of the double wedge $W$ (seen as the dark gray double wedge extended with two light gray wedges in the right portion of Figure~\ref{fig:ray-wheel-1}). 
For all remaining pairs, the geodesic path from $u$ to $u'$ intersects both rays either before or after they have passed through the origin.
Therefore, by Lemma~\ref{lem:a-mobile-sees-all}, the hidden open mobile guard set induced by $g$ sees the entire polygon except (possibly) the pair of wedges bounded by two pairs of consecutive reflex vertices adjacent to $u$ and $u'$. 

It may be the case that the two remaining wedges are actually a single non-convex subpolygon with reflex vertex at the origin (see Figure~\ref{fig:non-convex-remainder}).

\begin{figure}[ht]
\centering
\includegraphics[width=0.51\columnwidth]{./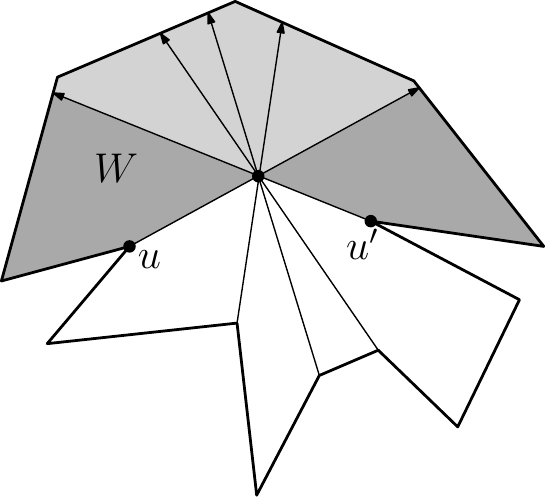}
\caption{A polygon and double wedge $W$ (dark gray region) where the region not necessarily guarded by the hidden open mobile guard set induced by the geodesic path from $u$ to $u'$ is actually a single non-convex polygon (light and dark gray regions combined) bounded by $u$ and $u'$.}
\label{fig:non-convex-remainder}
\end{figure}

In this situation the subpolygon can be bisected into two convex subpolygons by a ray bisecting the reflex angle at the origin.

\begin{figure}[ht]
\centering
\includegraphics[width=1.0\columnwidth]{./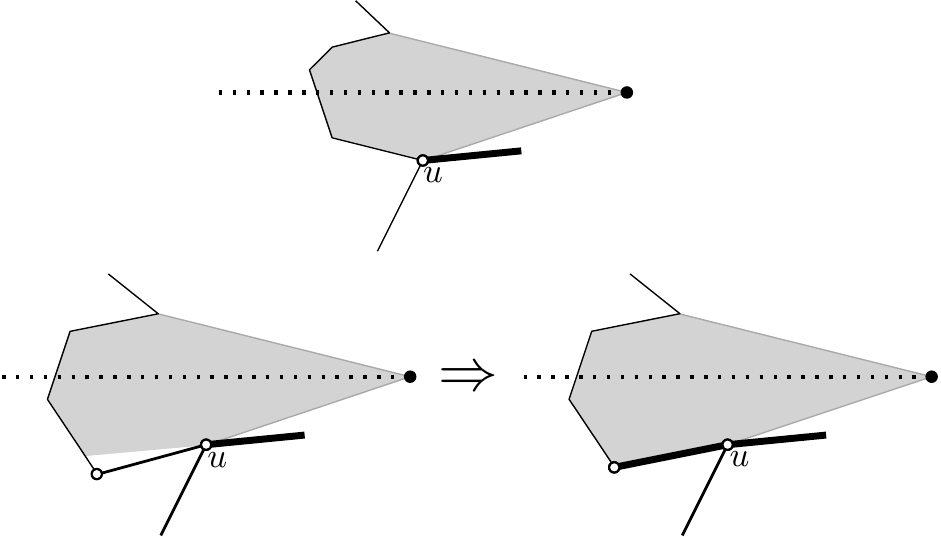}
\caption{The two cases of guarding the remaining subpolygons.
In the case shown in the upper part of the figure, the existing geodesic is sufficient to guard the wedge.
In the second case, the geodesic leaves a portion of the wedge unguarded and must be extended.}
\label{fig:geodesic-corner-case}
\end{figure}

Recall that each convex subpolygon has a vertex $u$ or $u'$ in common with the geodesic's final edge (see Figure~\ref{fig:geodesic-corner-case}).
If the interior angle formed by these two edges is at most $\pi$, then the subpolygon is seen by the interior of the final edge of the geodesic.
If not, the geodesic can be extended to include an edge of $\partial P$ in the subpolygon that guards the subpolygon completely.

\begin{figure}[ht]
\centering
\includegraphics[width=0.51\columnwidth]{./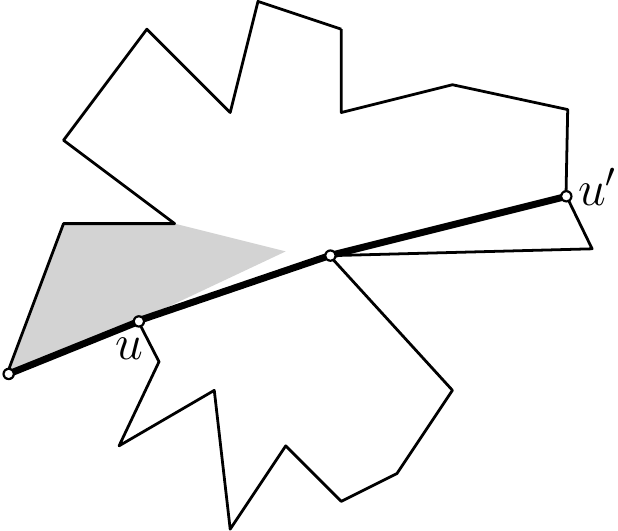}
\caption{A polygon with a geodesic path inducing a hidden open mobile guard set for the polygon.
The initial geodesic from $u$ to $u'$ leaves the gray region incident to $u$ partially unguarded, so the geodesic is extended by one edge.}
\label{fig:ray-wheel-2}
\end{figure}

Thus the hidden mobile guard set induced by the geodesic described guards $P$.
\end{proof}

Computing such a guard set for a polygon with $n$ edges can be done in $O(n)$ time, as each step takes at most $O(n)$ time:
1.~compute a point in the kernel of the polygon ($O(n)$~time by Lee and Preparata~\cite{Lee-1979}).
2.~find a separating angle $\theta$ ($O(n)$ time).
3.~triangulate the polygon and find a geodesic between the reflex vertices $u$ and $u'$ ($O(n)$~time by Fournier and Montuno~\cite{Fournier-1984} and Guibas et al.~\cite{Guibas-1987}).
4.~check whether the two remaining subpolygons are already covered by the geodesic, and extend the geodesic by an additional edge if necessary~($O(1)$~time).

\section{Closed edge and diagonal guards}

In the next section we present orthogonal and monotone polygons that do not admit hidden closed mobile guard sets.
Note that these polygons also serve as examples of polygons that do not admit hidden closed edge or hidden closed diagonal guards.
For starshaped polygons no such example is known.

\begin{lemma}
There exists a starshaped polygon that does not admit a hidden closed edge guard set.
\end{lemma}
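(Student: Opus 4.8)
The plan is to exhibit a specific star-shaped polygon $P$ (displayed in an accompanying figure) and argue, by a finite case analysis of the style used for the open-edge starshaped lemma, that every closed edge guard set of $P$ contains two edges that see each other. The construction is a variant of the earlier star-shaped example: $P$ consists of a central convex region containing the kernel, together with a collection of spike-shaped \emph{ears} attached around its boundary, tuned so that deep inside each ear there is a witness point visible only from a small, explicitly listed set of edges ``belonging'' to that ear, while the few edges of the central region each reach the interior of at most one ear. Since $P$ is star-shaped, large parts of it are mutually visible, so useful ear edges in different ears tend to see each other through the center; this is the tension the counterexample exploits, and the extra care needed relative to the open case is that closed guards include their endpoints.

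First I would record the structural constraints on any hidden closed edge guard set $G$. Because closed edges include their endpoints, two edges sharing a vertex automatically see each other, so $G$ contains no two consecutive edges; and the candidate edges within any single ear are arranged to pairwise see each other, so $G$ contains at most one edge per ear. Next, the witness points give the covering condition: for each ear, $G$ must contain one of that ear's own edges or one of the (few) central edges reaching into it; since each central edge reaches only one ear, all but a bounded number of ears must be guarded by one of their own edges. I would then show that among the ear edges so selected there are two lying in different ears, and that the geometry forces them to see each other — essentially because a near-central radial (or horizontal) segment joins points on the two edges while staying inside the star-shaped core — contradicting hiddenness. The bounded number of ``central-edge-guarded'' ears is then disposed of by a direct check that no assignment of the remaining guards avoids a mutually visible pair.

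The main obstacle is geometric tuning rather than combinatorics. One must place the reflex vertices bounding each ear so that simultaneously: (i) each witness point is genuinely invisible from every edge outside its designated set, \emph{including from the endpoints of central edges}, which is precisely where endpoint-inclusive visibility could admit a guard that the open case forbids; (ii) each central edge sees into only a single ear; and (iii) in spite of (i) and (ii), enough pairs of ear edges remain mutually visible that the pigeonhole step actually produces a forbidden pair. I would also verify that the kernel is nonempty so that $P$ really is starshaped, and then enumerate — via the figure, as elsewhere in this paper — every maximal hidden candidate set together with the region it leaves uncovered; the chief risk is an overlooked case in that enumeration, so the figure must be exhaustive.
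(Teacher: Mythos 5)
Your overall strategy---exhibit a star-shaped counterexample and rule out every hidden closed edge set by a visibility argument---is the right one, and you correctly identify the key lever (endpoint-inclusive visibility makes many edge pairs mutually visible through the convex core). But as written the proposal has a genuine gap: it never produces the polygon. Everything rests on a figure that does not exist and on ``geometric tuning'' that you yourself flag as the main obstacle, namely arranging simultaneously that (i) each witness point is invisible from the endpoints of all central edges, (ii) each central edge reaches only one ear, and (iii) enough ear edges still see each other to run the pigeonhole step. These conditions pull against one another, and without an explicit construction there is no way to check that they can all be met; the ``bounded number of central-edge-guarded ears'' cleanup step is likewise asserted rather than argued. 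A case enumeration over many ears, each with several candidate edges, is exactly the kind of argument where an overlooked case sinks the proof.

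The paper's construction shows that none of this machinery is needed, because it pushes your own observation to its extreme: it uses a central convex region with just \emph{two} spikes, built so that \emph{every} edge of the polygon has at least one endpoint on the central convex region. Since closed guards include their endpoints and any two such endpoints see each other across the convex core, \emph{all} edges pairwise see each other, so a hidden closed edge set contains at most one edge; and no single edge sees the interiors of both spikes completely. That two-sentence argument replaces your per-ear witness points, the pigeonhole over ears, and the residual case analysis entirely. If you want to salvage your approach, the fix is not more ears but fewer: drop the requirement that central edges be useless inside ears, and instead make every edge touch the mutually visible core so that the hidden set is capped at size one globally.
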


\begin{figure}[ht]
\centering
\includegraphics[width=.4\columnwidth]{./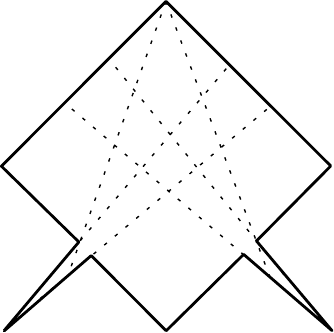}
\caption{A starshaped polygon that does not admit a hidden closed edge guard set.}
\label{fig:closed-edge-ss-ex}
\end{figure}

\begin{proof}
See Figure~\ref{fig:closed-edge-ss-ex}.
Every edge has at least one endpoint on the central convex region, so any hidden edge set has at most one edge.
However, no single edge is sufficient to guard the polygon as no edge guards the interior of both spikes completely.
\end{proof}

\begin{lemma}
There exists a starshaped polygon polygon that does not admit a hidden closed diagonal guard set.
\end{lemma}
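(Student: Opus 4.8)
The plan is to follow the template of the preceding closed-edge lemma as closely as possible: exhibit a starshaped polygon $P$ that is a central convex region $C$ with several long, thin triangular spikes attached to edges of $C$ and pointing outward, tuned so that two facts hold. Fact (i): no straight segment between two spike tips has its relative interior inside $P$, so every diagonal of $P$ has at least one endpoint that is a vertex of $C$. Fact (ii): no single diagonal sees the interior of every spike. Fact (i) plays the role of ``every edge has an endpoint on the central convex region'': given diagonals $d_1,d_2$ with endpoints $r_1,r_2$ on $\partial C$, either $r_1=r_2$ (so $d_1,d_2$ share a point and, being closed, see each other) or the open segment $(r_1,r_2)$ lies in ${\rm int}(P)$ --- it is a chord of the convex polygon $C$, and when $r_1,r_2$ are adjacent on $C$ it is the base of a spike, whose relative interior is still interior to $P$ --- so perturbing the two endpoints a hair into $P$ along $d_1$ and $d_2$ shows $d_1$ sees $d_2$. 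Hence every hidden closed diagonal guard set has at most one element, and by Fact (ii) that lone diagonal cannot guard $P$, so the two facts together prove the lemma.

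For Fact (i) I would make the spikes long, thin, and with pairwise non-parallel axes all passing through one interior point $k$, and check that a segment joining two tips immediately leaves its spike's narrow cone and enters the exterior gap between spikes, so it is not a valid diagonal; choosing the axes concurrent at $k$ is also what keeps $P$ starshaped, with $k$ in the kernel.

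The main obstacle is Fact (ii), and the obstruction is exactly the kernel: every point of the kernel sees all of ${\rm int}(P)$, so if a diagonal passes through a kernel point then that diagonal alone guards $P$. The construction must therefore force no diagonal to meet the kernel --- which is why I would pinch the kernel down to the single point $k$ (by arranging a few of the edges of $P$ to have supporting lines through $k$ in spread-out directions, so the inner half-planes intersect only in $\{k\}$) and place $k$ in general position, so that no vertex--vertex segment contains it. Even then one must rule out a diagonal guarding $P$ collectively, with different points of it covering different pieces; for that I would use enough thin spikes in general position that the single segment forming a diagonal --- anchored at a vertex of $\partial C$, whose relative interior can clip only a bounded number of the narrow tip-cones and whose at-most-one spike-tip endpoint handles at most one more spike --- always leaves some spike's tip region unseen. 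An alternative that sidesteps the single-point-kernel gymnastics is to mimic the monotone-and-starshaped example of Figure~\ref{fig:open-diagonal-ss-ex}: a tall, jagged polygon in which every diagonal must have an endpoint on one bottom edge (so any two diagonals see each other just above that edge), but now with two narrow unguardable slits at two widely separated topmost vertices, so the lone top endpoint of a closed diagonal sits at one slit and its body cannot reach the other; there the only thing to re-verify beyond the open case is that the kernel is kept clear of every (closed) diagonal.
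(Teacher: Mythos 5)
Your closing ``alternative'' is, in fact, the paper's entire proof: it reuses the polygon of Figure~\ref{fig:open-diagonal-ss-ex} unchanged, observing that both facts established for open diagonals persist for closed ones --- every diagonal has one endpoint on the bottom edge and its other endpoint above it, so any two closed diagonals see each other along a horizontal line just above that edge, and no single diagonal, even including its endpoints, reaches both regions near the two uppermost vertices. One small correction: the thing to re-verify in the closed case is not that ``the kernel is kept clear of every closed diagonal'' (that is necessary for a diagonal to fail as a lone guard, but not sufficient); what must be checked is exactly the two-top-regions argument you state in the same sentence. Had you led with this route, your proof would coincide with the paper's.

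Your primary construction, however, has a genuine gap at Fact~(ii), and I believe the gap is fatal as stated. Since $P$ is starshaped with kernel $\{k\}$, the set of points of $P$ that see the \emph{entire} triangular spike $i$ is the thin wedge with apex at that spike's tip bounded by the extensions of its two edges, and this wedge must contain $k$ for every $i$. Hence all of these wedges overlap in a neighborhood of $k$, each having some fixed positive width there (roughly the apex angle times the distance from tip to $k$). If $k$ sits anywhere near the middle of the convex core, a diagonal joining two roughly antipodal core vertices passes arbitrarily close to $k$, therefore meets every one of these wedges; the meeting points see their respective spikes in full, the core is convex and contains the diagonal, and so this single closed diagonal guards all of $P$ --- a valid (trivially hidden) guard set. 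Your assertion that a diagonal ``clips only a bounded number of the narrow tip-cones'' is thus not merely unproven but false for natural instances of the construction, and ``general position'' of $k$ (avoiding exact containment in a vertex--vertex segment) does not help, since proximity suffices. Salvaging this route would require showing that $k$ can be kept uniformly far from every diagonal while retaining many spread-out spikes and a point kernel, which you have not done; the fallback route is the one to use.
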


\begin{proof}
The polygon in Figure~\ref{fig:open-diagonal-ss-ex} does not admit hidden open diagonal guard set, and also fails to admit a hidden closed diagonal guard set for the same reason: no single diagonal is sufficient to guard the entire polygon and any set of two or more diagonals is not a hidden set. 
\end{proof}

\section{Closed mobile guards}

\begin{lemma}
There exists an orthgonal polygon that does not admit a hidden closed mobile guard set.
\end{lemma}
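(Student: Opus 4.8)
The plan is to exhibit an orthogonal polygon $P$ built from a central convex (rectangular) \emph{room} with several orthogonal \emph{ears} attached along distinct edges of the room, each ear a non-convex orthogonal sub-polygon hiding a \emph{pocket} location behind a reflex lip. I would then show that any hidden closed mobile guard set for $P$ covers the interior of at most one ear, so a $P$ with at least two ears admits no such set. The argument rests on three claims: (C1) the pocket of an ear is weakly visible only from edges and diagonals whose vertices all lie in that ear, and in fact only from guards incident to a designated \emph{portal vertex} of the ear lying on the boundary of the room; (C2) no single mobile guard of $P$ is weakly visible to the pockets of two distinct ears; and (C3) any two ear-local guards belonging to distinct ears, taken as \emph{closed} guards, see each other.

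To make C1 hold I would shape each ear as a narrow orthogonal ``staircase'' terminating in its pocket, choosing the wall positions so that the visibility region of the pocket within the ear is a thin sliver hugging the lip; a finite check over the finitely many edges of the ear and the chords joining its vertices then shows that precisely the guards incident to the portal vertex reach the pocket. Claim C2 follows once the ears are made vertex-disjoint and are spaced far apart along the room's boundary with their lips oriented ``inward'': a straight chord of $P$ aimed into one deep, thin pocket is forced nearly parallel to that ear's axis and so cannot simultaneously dip into a second, differently oriented ear, and no central edge of the room reaches any pocket at all.

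Claim C3 is where closedness does the work, and it is the reason the analogous statement fails for open edge guards (Lemma~\ref{lem:hidden-open-edge-ortho}). A closed guard contains its endpoint vertices, so an ear-local guard contains its ear's portal vertex, which lies on the boundary of the convex room; placing the portal vertices of different ears on different edges of the rectangular room guarantees that any two of them are joined by an open segment lying in the room's interior, hence in ${\rm int}(P)$. Thus two ear-local closed guards always see each other, so a hidden set contains at most one of them; together with C1 and C2 this forces at least one pocket to go unguarded, proving $P$ admits no hidden closed mobile guard set --- and $P$ is orthogonal by construction.

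The step I expect to be the main obstacle is controlling diagonals, exactly as in the open-mobile simple-polygon construction of Section~\ref{sec:open-mobiles}. An edge guard is short and local, but a diagonal can be a long chord crossing the central room, so the construction must defuse two dangers at once: a diagonal weakly visible to two pockets (ruled out by C2), and a diagonal whose two endpoints are buried deep in one ear that covers that ear's pocket yet is hidden from a similarly buried diagonal in another ear --- two such chords passing above a thin room need not meet ${\rm int}(P)$, so they could be mutually invisible, which would break C3. The remedy is to ensure via the thin-sliver design that every pocket-covering chord is forced down to the room's boundary and in particular has the portal vertex as an endpoint; verifying this for the explicit coordinates chosen --- that is, bounding every sightline into every pocket --- is the technical heart of the proof and will occupy most of a figure-driven case analysis.
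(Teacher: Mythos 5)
Your claim (C1) --- that the pocket of an ear is weakly visible only from guards incident to a single portal vertex on the room's boundary --- cannot be realized by any orthogonal construction, and the rest of your argument collapses without it. The ray-shooting argument in the proof of Lemma~\ref{lem:hidden-open-edge-ortho} shows that every interior point of an orthogonal polygon is seen by a downward-facing horizontal edge lying directly below it (or incident to the vertex the ray hits); for a pocket buried at the end of a staircase ear, that edge is deep inside the ear and is not incident to any vertex of the central room. So every pocket is always coverable by an edge local to the far end of its own ear, and two such edges in different narrow, bent ears need not see each other even as closed guards, since their only potential sightline would have to traverse both staircases and the room. Consequently the reduction to ``at most one ear-local guard may appear in a hidden set'' does not go through, and it is entirely plausible that the polygon you describe in fact \emph{does} admit a hidden closed mobile guard set built from one deep edge per ear plus further downward-facing edges. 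The thin-sliver shaping you propose can constrain long chords entering a pocket from outside its ear, but it cannot prevent the pocket's own bounding edges from seeing the pocket.

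What is missing is an argument that covering a pocket with a guard deep inside its ear is itself fatal --- for instance, that any such guard sees a strip sweeping the entire ear, so that no second hidden guard fits inside the ear, while the first guard necessarily leaves some other part of the ear or the room uncovered. This forcing step is exactly the role of the ``cave'' in the paper's construction: the cave is shaped so that any guard seeing more than a small portion of it sees a narrow strip running its full length and containing all of its reflex vertices, which forbids a second guard in the cave and forces a single diagonal spanning it; that forced diagonal cannot see either of two additional ears, and any guard covering one ear conflicts with covering the other. You would need to supply an analogous forcing argument for each of your pockets --- not merely a visibility restriction on which guards can reach them --- before your conclusion follows.
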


\begin{figure}[ht]
\centering
\includegraphics[width=.8\columnwidth]{./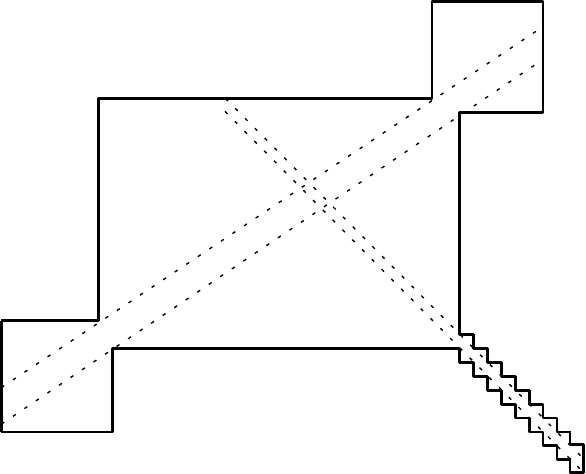}
\caption{An orthogonal polygon that cannot be guarded using hidden closed mobile guards.}
\label{fig:closed-mobile-ortho-ex}
\end{figure}

\begin{proof}
See Figure~\ref{fig:closed-mobile-ortho-ex}.
We refer to the convex regions in the lower left and upper right of the polygon as \emph{ears}, and the narrow region in the lower right of the polygon as the \emph{cave}.
First, consider guarding the cave region.
Any guard that sees a portion consisting of more than four convex vertices in the cave sees a narrow strip of space extending to the end of the cave and containing all reflex vertices in the cave.
As a result, a second guard inside the cave is not permitted, as either: 1. an endpoint of the second guard is seen by the first guard, or 2. the second guard intersects the narrow strip seen by the first guard.
Therefore, a single guard extending the length of the cave is needed, and because no vertex in the remainder of the polygon can be connected to such a guard, this guard ends at a vertex at the mouth of the cave.

\begin{figure}[ht]
\centering
\includegraphics[width=.8\columnwidth]{./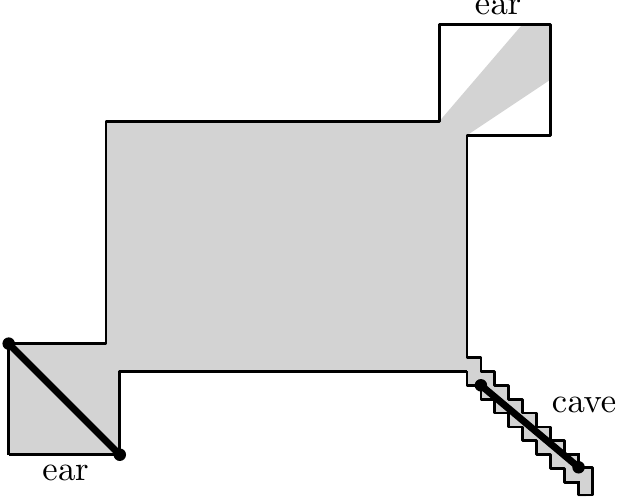}
\caption{Attempting to guard an orthogonal polygon with closed mobile guards. An initial guard spanning the length of the cave is necessary, and a second guard for an ear results in an unguardable region in the other ear.}
\label{fig:closed-mobile-ortho-pf-1}
\end{figure}

Now consider guarding the remainder of the polygon (see Figure~\ref{fig:closed-mobile-ortho-pf-1}).
The guard inside the cave is unable to see either ear.
Guarding both ears using a second guard is forbidden, as such a guard must intersect the region seen by the guard in the cave.
Guarding one ear using a second guard results in an unguarded region in the other ear, and no hidden guard choices remaining.
So the polygon does not admit a hidden closed mobile guard set.
\end{proof}

\begin{lemma}
There exists a monotone polygon that does not admit a hidden closed mobile guard set.
\end{lemma}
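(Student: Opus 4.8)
The plan is to build a monotone polygon by the same design pattern used in the orthogonal closed‑mobile example (Figure~\ref{fig:closed-mobile-ortho-ex}): a long ``cave'' that forces one guard to be committed inside it, plus two ``ears'' placed so that no single remaining guard can cover both and no second guard can be added without being seen by the cave guard. The only new requirement is that the resulting polygon be monotone, so I would lay out the cave and the two ears along a common horizontal direction, making the polygon x‑monotone, with the cave a narrow monotone strip and the two ears convex pockets hanging off the top and bottom of the monotone chain (as in the orthogonal picture, but with slanted rather than axis‑parallel edges if that helps enforce monotonicity). I would then provide a figure exhibiting such a polygon and argue, exactly as in the orthogonal case, in three steps.

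First, I would show the cave requires a dedicated guard that spans its length and terminates at a vertex near the mouth of the cave. As before, the key geometric fact is that any guard seeing sufficiently deep into the cave sees a narrow strip running to the end of the cave and containing all of the cave's reflex vertices; hence a second guard cannot also lie in the cave (either its endpoint is seen by the first guard, or it crosses that strip), so a single guard is needed, and since no vertex outside the cave connects to a long guard inside it, that guard ends at the cave's mouth. Second, I would observe this cave guard sees neither ear, so the rest of the polygon — in particular both ears — must be guarded by the remaining hidden guards. Third, I would argue (from the figure) that a single guard covering both ears must pass through the strip already seen by the cave guard, hence is forbidden; and that committing a second guard to one ear leaves a portion of the other ear unguarded with no admissible hidden guard remaining. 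Combining the three steps, the polygon admits no hidden closed mobile guard set.

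The main obstacle is reconciling the ``cave plus two separated ears'' gadget with the monotonicity constraint: monotone polygons cannot have the kind of tightly coiled cave that an orthogonal or general simple polygon allows, so I must verify that a monotone cave can still be made narrow and long enough that the ``narrow strip'' property holds, i.e. that any guard reaching deep into it is forced to block all later illumination. I would also need to check that the two ears can be attached to an x‑monotone chain while keeping each ear invisible to the cave guard and keeping the ``no single guard sees both ears, and a second ear‑guard necessarily meets the cave strip'' geometry intact. Once the figure is drawn so that every vertical line meets the polygon in a single segment and the visibility relations asserted above hold, the argument is the same three‑step case analysis as the orthogonal lemma, so the real work is in the construction and in confirming these visibility claims from the picture.
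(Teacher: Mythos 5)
Your proposal follows essentially the same route as the paper: the paper's construction is exactly a monotone adaptation of the orthogonal cave-plus-ears gadget, with a long cave on the right forcing a single spanning diagonal guard and two triangular ears on the left that then cannot both be covered by hidden guards (the paper phrases the last step as ``two guards are needed for the ears and any two such guards see each other,'' a minor variant of your step three). As with the paper's own proof, the substance lives in the figure and its visibility claims, which you correctly identify as the remaining work.
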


\begin{figure}[ht]
\centering
\includegraphics[width=0.9\columnwidth]{./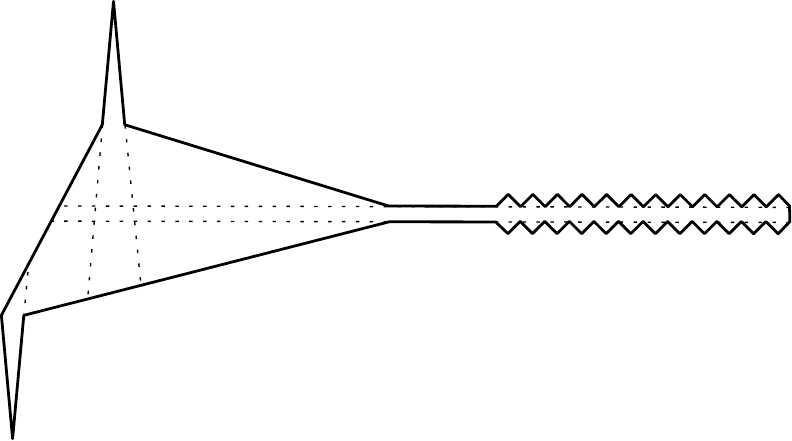}
\caption{A monotone polygon that cannot be guarded using hidden closed mobile guards.}
\label{fig:closed-mobile-monotone-ex}
\end{figure}

\begin{proof}
See Figure~\ref{fig:closed-mobile-monotone-ex}.
The construction is similar to that of Figure~\ref{fig:closed-mobile-ortho-ex}, with a long cave (right portion of the polygon) and pair of ears (lower and upper left portion of the polygon).
Guarding the cave requires using a single diagonal guard that extends the length of the cave (see Figure~\ref{fig:closed-mobile-monotone-pf-1}).
As a result, two guards are needed to guard the triangular pair of ears in the left portion of the polygon.
Any two such guards must see each other, so the polygon does not admit a hidden closed mobile guard set.

\begin{figure}[ht]
\centering
\includegraphics[width=0.9\columnwidth]{./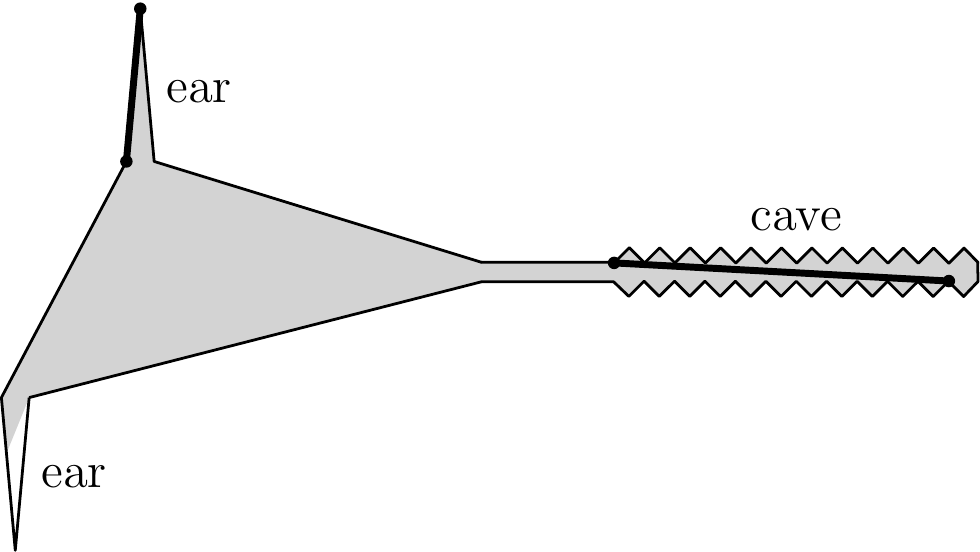}
\caption{Attempting to guard a monotone polygon with hidden closed mobile guards.
An initial guard spanning the length of the cave is necessary, and a second guard for an ear results in an unguardable region in the other ear.}
\label{fig:closed-mobile-monotone-pf-1}
\end{figure}

\end{proof}

We end with the open problem from Table~\ref{tab:results}.

\begin{conj}
Every starshaped polygon admits a hidden closed mobile guard set.
\end{conj}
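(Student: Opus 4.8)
A natural plan is to start from the hidden \emph{open} mobile guard set produced in Lemma~\ref{lem:open-mobile-ss}: a geodesic path $g$ between two reflex vertices $u,u'$ (possibly extended by one boundary edge), whose edge‑interiors already cover $P$ and are pairwise hidden. Since closed guards see a superset of what their open counterparts see, coverage is automatic, so the only thing to fix is hiddenness. The obstruction is easy to pinpoint: two consecutive edges of $g$ meet at a reflex vertex $w$ of $P$, and as soon as one of them is a diagonal of $P$ (the generic case) the vertex $w$ lies in the relative interior of that diagonal, so once endpoints are included the two guards see each other. (Under the literal reading of the visibility definition even two polygon edges sharing $w$ conflict, via the degenerate segment $\{w\}$; this is presumably why the table records ``No'' for hidden closed edge and diagonal guards in every class.) Thus the whole problem is to de‑conflict guards that share a vertex.

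To repair $g$ I would attempt \emph{local surgery}. For a maximal run of consecutive geodesic edges that are diagonals, joined at reflex vertices $w_1,\dots,w_t$, replace the whole run by the single segment joining its two endpoints whenever that segment is a legitimate diagonal of $P$; the part of $P$ then left uncovered near each $w_\ell$ is a thin convex pocket, which can be absorbed by extending a neighbouring guard by one boundary edge --- exactly the move already used for the wedge case at the end of the proof of Lemma~\ref{lem:open-mobile-ss}. Runs of geodesic \emph{polygon} edges (as opposed to diagonals) meeting only at reflex vertices need no surgery under the non‑degenerate reading, since a short segment between two such edges leaves $P$ on the pocket side. A parallel, more global attack: fix a kernel point $k$, take the fan triangulation $P=\bigcup_i \overline{k\,v_iv_{i+1}}$ (each polygon edge $\overline{v_iv_{i+1}}$ weakly sees its own triangle, and sometimes neighbouring triangles as well), and select a family of edges and diagonals that covers all triangles while being pairwise non‑adjacent and mutually invisible, adding long diagonals whenever one diagonal can cover several troublesome sub‑regions at once.

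I expect the crux to be the step common to both attacks. The shortcut segment joining the endpoints of a diagonal run need not lie inside $P$: a geodesic bends \emph{away} from each reflex pocket, so the shortcut cuts the corner on the pocket side and can escape the polygon. And in the fan picture a highly spiked starshaped polygon (e.g.\ Figure~\ref{fig:no-mobile-in-kernel}) can force one guard per spike, while all spike edges pairwise see one another across the convex core, so no hidden selection of spike \emph{edges} exists --- the only hope is a single long diagonal threading several spike mouths at once, reminiscent of the ``cave'' diagonals of Figures~\ref{fig:closed-mobile-ortho-ex} and~\ref{fig:closed-mobile-monotone-ex}. Turning that intuition into a uniform argument --- or instead finding the elusive counterexample --- is exactly what keeps the statement a conjecture; a reasonable first step would be an exhaustive check over starshaped polygons with few reflex vertices, together with an attempt at induction on the reflex count, while keeping in mind that cutting a starshaped polygon by a chord need not preserve starshapedness, so a naive induction fails.
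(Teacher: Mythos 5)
The statement you were asked to prove is not proved in the paper either: it is stated explicitly as a conjecture and corresponds to the lone ``?'' entry in Table~\ref{tab:results}. There is therefore no proof of the authors' to compare yours against, and your proposal --- by its own admission in the final sentence --- is not a proof but a survey of attack strategies together with an honest account of where each one breaks. Judged as a proof of the statement, it has a gap that you yourself name: the local surgery on the geodesic requires the shortcut segment over a run of reflex vertices to be a diagonal of $P$, which fails precisely because the geodesic bends around the pockets, and the fan-triangulation approach stalls on spiked polygons where all candidate spike guards pairwise see one another across the convex core. Neither difficulty is resolved, so no theorem results.

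That said, your diagnosis of the obstruction is correct and worth recording. The hidden open mobile guard set of Lemma~\ref{lem:open-mobile-ss} does cover $P$, and closing the guards preserves coverage; what breaks is hiddenness at each reflex vertex $w$ shared by two consecutive geodesic edges, since once $w$ is included in a closed guard it sees every point of an adjacent diagonal along that diagonal itself (whose relative interior lies in ${\rm int}(P)$ by definition). You also correctly observe that a chord between the \emph{open} interiors of two consecutive geodesic edges near $w$ exits the polygon on the convex side of the bend, which is why Observation~\ref{obs:geodesic-hidden} holds for open guards but does not survive closure. Your suggestion of an exhaustive check on starshaped polygons with few reflex vertices, and your warning that chord-cutting does not preserve starshapedness (so naive induction on reflex count fails), are both reasonable next steps; but as matters stand the statement remains exactly what the paper says it is: open.
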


\section*{Acknowledgements}

We thank Csaba T\'{o}th for helpful discussions and Richard Pollack, Joseph Malkevitch, John Iacono, and Bill Hall for suggesting interesting problems in this area.

\end{document}